\documentclass[12pt]{article}
\usepackage{graphicx}
\usepackage{tikz} %for diagram
\usepackage{amsmath,amsthm,amssymb,mathrsfs,mathtools,bm}
\usepackage[OT1]{fontenc} 
\usepackage{etoc}
\usepackage{float}
\usepackage{titlesec}
\usepackage{multirow,array}
\usepackage{diagbox}
\usepackage{fancyhdr}
\titleformat{\section}
		{\scshape \large \bfseries\center}
         {\thesection}% the label and number
        {0.5em}% space between label/number and subsection title
        {}% formatting commands applied just to subsection title
        []% punctuation or other commands following subsection title
\titleformat{\subsection}%[runin] runin puts it in the same paragraph
        {\normalfont\bfseries}% formatting commands to apply to the whole heading
        {\thesubsection}% the label and number
        {0.5em}% space between label/number and subsection title
        {}% formatting commands applied just to subsection title
        []% punctuation or other commands following subsection title

\allowdisplaybreaks

\def\subsectiontitle{}
\def\subsubsectiontitle{}
\usepackage{comment}
\usepackage[backgroundcolor=orange!50!white]{todonotes}
\usepackage{tikz}
\usetikzlibrary{automata, positioning}

\usepackage{kpfonts}
\usepackage{inconsolata}

\usepackage{makecell}
\usepackage{cellspace}
\usepackage{enumitem}
\usepackage[margin=1in]{geometry}
\usepackage[colorlinks,breaklinks=true]{hyperref}
\usepackage{breakcites}
\usepackage{xcolor}
\AtBeginDocument{%
	\hypersetup{%
    linkcolor={red!50!black},%
    citecolor={blue!50!black},%
    urlcolor={blue!80!black}%
}%
}%

\usepackage[nameinlink,noabbrev,sort,capitalise]{cleveref}
\makeatletter
\def\ps@pprintTitle{%
 \let\@oddhead\@empty
 \let\@evenhead\@empty
 \def\@oddfoot{\emph{Very preliminary version}\hfill\emph{This draft: \today}}%
 \let\@evenfoot\@oddfoot}
\makeatother
\usepackage{etoolbox}
\patchcmd{\pprintMaketitle}
  {\fi\hrule}% the second rule
  {\fi\ifvoid\extrainfobox\else\unvbox\extrainfobox\par\vskip10pt\fi\hrule}
  {}{}

\newsavebox\extrainfobox

\newtheorem{proposition}{Proposition}
\crefname{proposition}{Proposition}{Propositions}

\newtheorem{theorem}{Theorem}
\crefname{theorem}{Theorem}{Theorems}

\crefname{corollary}{Corollary}{Corollaries}

\newtheorem{lemma}{Lemma}
\crefname{lemma}{Lemma}{Lemmas}

\crefname{assumption}{Assumption}{Assumptions}

\crefname{axiom}{Axiom}{Axioms}

\newtheorem{definition}{Definition}
\crefname{definition}{Definition}{Definitions}

\theoremstyle{remark}

\theoremstyle{definition}
\newtheorem{example}{Example}
\crefname{example}{Example}{Examples}

\crefname{problem}{Problem}{Problems}

\theoremstyle{claim}
\newtheorem{claim}{Claim}
\crefname{claim}{Claim}{Claimss}

\usepackage{indentfirst}

\allowdisplaybreaks

\let\oldfootnote\footnote
\renewcommand\footnote[1]{\oldfootnote{\hspace{.4mm}#1}}
\makeatletter
\renewenvironment{proof}[1][\proofname] {\par\pushQED{\qed}\normalfont\topsep6\p@\@plus6\p@\relax\trivlist\item[\hskip\labelsep\bfseries#1\@addpunct{.}]\ignorespaces}{\popQED\endtrivlist\@endpefalse}
\makeatother
\let\oldFootnote\footnote
\newcommand\nextToken\relax

\renewcommand\footnote[1]{%
    \oldFootnote{#1}\futurelet\nextToken\isFootnote}

\newcommand\isFootnote{%
    \ifx\footnote\nextToken\textsuperscript{,}\fi}
\usepackage[american]{babel}
\usepackage[authoryear,round]{natbib}
\def\E{\mathbb{E}}
\def\R{\mathcal{R}}

\usepackage{todonotes}

\usepackage{ifthen}
\usepackage{xcolor}
\newboolean{commentsactivated}
\setboolean{commentsactivated}{true}

\usepackage{amstext} % for \text macro
\usepackage{array}   % for \newcolumntype macro
\newcolumntype{L}{>{$}l<{$}} % math-mode version of "l" column type

\newcommand{\hbmu}{\hat{\pmb{\mu}}}
\newcommand{\hbz}{\hat{\mathbf{z}}}
\newcommand{\hbbz}{\hat{\mathbf{Z}}}

\begin{document}

\title{Individual Sovereignty and Other-Regarding Preferences}

\date{\today}

\author{\makebox[.25\linewidth]{{John Mori}\thanks{University of Chicago Department of Economics; email: \protect\texttt{johnmori@uchicago.edu}. I am grateful to Ben Brooks, Phil Reny, and Joe Root for guidance and feedback. I also thank the audience at the theory workshop at the University of Chicago for helpful comments.}}}

\maketitle

\begin{abstract}
    We consider the social aggregation of preferences over lotteries in the presence of other-regarding preferences. If society respects each individual's \emph{sovereignty}, an axiom we propose akin to Sen's \emph{Liberalism}, then society's utility is a linear combination of individuals' self-regarding utilities. That is, other-regarding preferences can only influence the weights society places on each individual. We next characterize the unique weighting method under which society's weight ratio between two individuals is the geometric mean of that of all individuals. The first distinguishing axiom concerns the consistency of sequential aggregation, while the second concerns consensus across changes in individuals' other-regarding preferences. We extend the first result to a setting with feasibility constraints and a setting with subjective uncertainty.
\end{abstract}
\section{Introduction}

People vote not only according to their economic self-interest, but also according to their \emph{other-regarding preferences} \citep{blais2000vote}. Some even favor policies against their economic self-interest, e.g. a wealthy person concerned about fairness might support a policy implementing a more progressive tax system than the status quo. Should society deem this person better off under the more progressive tax system? On the one hand, they prefer the policy, and on the other, their consumption would be reduced by paying higher taxes. In this paper, we tackle the question of how society should weigh other-regarding preferences in a social welfare analysis.

If other-regarding preferences are included in social welfare, society may adopt policies that are supported primarily by people's other-regarding preferences, but are \emph{paternalistic} --- require individuals to take, or prohibit them from taking, actions that affect only themselves, and that may conflict with their own preferences. Governments, to varying degrees, require saving for retirement, restrict risky investment products, mandate various safety measures, and restrict forms of gambling and drug use. One could argue against such policies under the principle that society should respect people's individual sovereignty --- each person's right to make decisions when only their life, property, or body is affected.

In this paper we demonstrate a tension between society weighing individuals' other-regarding preferences and respecting individual sovereignty. We then provide a theory of interpersonal comparisons of utility derived from individuals' other-regarding preferences.

We model other-regarding preferences in the following way: each individual $i$ has a set of individual outcomes $X_i$, and the set of social outcomes is the Cartesian product of sets of individual outcomes. We begin in a setting with objective uncertainty, where each individual has preferences over lotteries over the entire Cartesian product. These preferences comprise their self-regarding and other-regarding preferences. Like \cite{harsanyi1955cardinal}, we assume that each individual and society has expected utility preferences over these lotteries.

We introduce a condition on social preferences called \emph{individual sovereignty}. Social preferences satisfies individual sovereignty if for any two lotteries between which only the marginal distribution of individual $i$ differs, social preferences must defer to individual $i$'s preference between the lotteries. Theorem \ref{theorem1} shows that this condition, in conjunction with an independence assumption on individual preference, implies that society's utility is a linear combination of each individual's \emph{self-regarding} utility --- their utility over their own outcome. We call this class of social preferences \emph{self-regarding utilitarian}.

We then turn to the question of making interpersonal comparisons of utility, that is, how a self-regarding utilitarian society places weights on each individual's self-regarding utility. We propose an approach under which individuals' other-regarding preferences serve as a foundation for interpersonal comparisons. Theorem \ref{theoremgeom} characterizes a particular social welfare function we call \emph{geometric} self-regarding utilitarianism, where society's weight ratio between any two individuals is the geometric mean of that of all individuals.

Two axioms distinguish geometric self-regarding utilitarianism from other social welfare functions. \emph{Representative consistency}, first proposed in \cite{chambers2008consistent}, requires that aggregating a preference profile in multiple steps --- first aggregating a subset of a preference profile, then aggregating the entire profile with the subset replaced by its aggregated preference --- yields the same social welfare as aggregating directly. This axiom is a desirable property of a representative democracy as it guarantees immunity from gerrymandering. \emph{Homogeneity in interpersonal comparisons} requires that between two preference profiles, if every individual changes their other-regarding comparisons between two individuals by the same multiplicative scale, society should do so as well.

We provide two extensions of theorem \ref{theorem1}. The first extension is in a setting where the set of social outcomes is constrained to a subset of the Cartesian product of individual outcomes. Most applications of our results have a feasible set of social outcomes that is a strict subset of the Cartesian product. Theorem \ref{theorem3} shows an analogue of theorem \ref{theorem1} in this setting, and finds that the structure of the subset determines the strength of individual sovereignty. We discuss various design settings in relation to this observation.
 
Finally, the second extension is in a Savage setting, where individuals have heterogeneous subjective beliefs. Theorem \ref{theorem2} shows that a modified sovereignty condition implies that, similarly, society's utility is a linear combination of each individual's self-regarding utility, and also that society's subjective belief is an affine combination of individuals' beliefs.

\textbf{Related literature}. Our sovereignty condition harks back to the condition \emph{Liberalism} introduced by \cite{sen1970impossibility}. A social welfare function satisfies Liberalism if for each individual, there exists a pair of alternatives for which the social preference defers to the individual preference. Sen demonstrates an impossibility --- later deemed the ``Liberal Paradox" --- between the conditions Liberalism, weak Pareto, and unrestricted domain. 

Sen, and much of the literature that followed (see \cite{suzumura2011welfarism} for a review), do not decompose the set of social outcomes as a Cartesian product of individual outcomes, as we do. Several authors responding to Sen do examine such a product structure --- \cite{gibbard1974pareto} poses another impossibility result, \cite{hammond2002rights} examines liberalism in an exchange economy, \cite{breton1999separable} examines strategyproof mechanisms, and \cite{hammond1982liberalism}, \cite{igersheim2013invoking}, and \cite{dougherty2022effect} make restrictions on the domain of preferences to avoid the impossibility result. We instead eschew weak Pareto. Many have suggested why weak Pareto is less compelling in a setting with other-regarding preferences (see \cite{suzumura2011welfarism}).

Our results also contribute to the literature on other-regarding preferences. Much has been written on the experimental evidence for other-regarding preferences and their role in games and markets (see \cite{sobel2005interdependent} for a review). Much less attention has been given to other-regarding preferences in the social welfare analysis we pursue (see \cite{fleurbaey2009beyond} for a discussion). A few recent papers have begun to include other-regarding preferences in welfare analysis for settings without uncertainty (\cite{decerf2016fair}, \cite{treibich2019welfare}, \cite{fleurbaey2021fair}, \cite{segal2026aggregating}).

We model individuals as having other-regarding preferences over \emph{outcomes} of other individuals, which in the literature is known as \emph{paternalistic} other-regarding preferences \citep{jones1992paternalistic}. Individual $i$ can prefer an outcome for individual $j$ that $j$ themselves do not prefer. This is in contrast with the more commonly assumed \emph{non-paternalistic} other-regarding preferences --- those that are a function of other individuals' \emph{preferences} (see \cite{sobel2005interdependent} for many examples). We make this modeling choice for its generality: it can capture both paternalism and a form of non-paternalistic altruism we introduce in section \ref{geometric}. There is experimental evidence for such paternalistic preferences (see \cite{ambuehl2021motivates} and citations therein). For instance, \cite{jacobsson2007altruism} finds that individuals are paternalistic with regards to other individuals' health, preferring to give others in-kind donations of a health treatment rather than cash. Also, individual $i$ might agree with $j$'s preferences over $j$'s own outcomes ordinally, but may display different risk preferences from $j$ (see \cite{batteux2019our} for a review and meta-analysis). For instance, in many health settings, individuals are more risk-averse when making decisions on behalf of others.

Recently, \cite{eden2025ethical} considers a model where each individual's utility is a sum of their self-regarding utility --- defined over their own consumption --- and their (paternalistic) other-regarding utility over the consumption profile of society. They show that any social welfare function that is increasing in the utility of each individual is nearly entirely determined by the individuals' other-regarding utilities. Their results complement our results --- if the social welfare function is Paretian as they assume, then it largely ignores individuals' self-regarding preferences. On the other hand, if the social welfare function is liberal as we assume, then it largely ignores other-regarding preferences. 

Whether other-regarding preferences \emph{should} be included in welfare analysis is debated. \cite{milgrom1993sympathy} argues against their inclusion --- if individual $i$ cares about individual $j$'s consumption, $j$'s consumption would be ``double counted" by a welfare analysis that weighs other-regarding preferences (see also \cite{dworkin1990double}). Others have given conditions in various settings under which analysis including other-regarding preferences yield the same conclusions as analysis with just self-interested individuals (\cite{winter1969simple}, \cite{dufwenberg2011other}, \cite{eden2024irrelevance}). Theorem \ref{theorem1} limits the role of other-regarding preferences to just determining weights, while theorem \ref{theoremgeom} prescribes a particular weighting as a function of other-regarding preferences.

Finally, theorem \ref{theoremgeom} speaks to the literature on making interpersonal comparisons of utility.\footnote{See \cite{sen1999possibility}, \cite{d2002interpersonal}, and \cite{fleurbaey2011theory} for reviews of the literature.} In our shared setting with objective uncertainty, \cite{harsanyi1955cardinal} characterized utilitarianism, but left open the question of how to assign weights. \emph{Relative utilitarianism}, which sums up individuals' vNM utility normalized to a unit interval, has been characterized by many authors (\cite{karni1998impartiality}, \cite{dhillon1999relative}, \cite{segal2000let}, \cite{borgers2017revealed}). Our approach differs in that geometric self-regarding utilitarianism does not satisfy Harsanyi's Pareto condition.

Section \ref{objective} presents the model in a setting with objective uncertainty and gives a characterization of self-regarding utilitarianism. Section \ref{geometric} introduces additional axioms to characterize geometric self-regarding utilitarianism. Section \ref{constrain} extends theorem \ref{theorem1} to a setting where social outcomes are constrained to a subset of the Cartesian product. Section \ref{subjective} extends theorem \ref{theorem1} to a setting with subjective uncertainty. Section \ref{conclusion} concludes. All proofs are in section \ref{proofs}.

\section{Objective Uncertainty} \label{objective}

Let $I$ be a set of $n$ individuals, with generic elements $i$ and $j$. For each individual $i \in I$, let $X_i$ be a set of \emph{individual} outcomes for individual $i$, with generic element $x_i$. Examples of individual outcomes include consumption bundles, assignments or allocations, and individual experiences of a social policy.

The set of \emph{social} outcomes $X$ is the $n$-fold Cartesian product of sets of individual outcomes, i.e. $X = X_1 \times \dots \times X_n$, where social outcome $x = (x_1, \dots, x_n)$ specifies an individual outcome for each individual. For any $i \in I$, denote by $X_{-i} = X_1 \times \dots \times X_{i-1} \times X_{i+1} \times \dots \times X_n$ the $n-1$-fold Cartesian product of sets of outcomes for individuals other than $i$.

Denote by $\Delta(X)$ the set of probability distributions with finite support, or lotteries, over $X$, with generic elements $P$ and $Q$. For each $i \in I$, $\Delta(X_i)$ is the set of lotteries over $X_i$. For any lottery $P \in \Delta(X)$, denote by $P_i$ the marginal distribution on $X_i$, and by $P_{-i}$ the marginal distribution on $X_{-i}$.

Each individual $i \in I$ has preferences $\succsim_i \subseteq \Delta(X) \times \Delta(X)$. We assume that each individual's preferences admit a von Neumann-Morgenstern expected utility (EU) representation --- say that $\succsim_i$ is \emph{vNM-rational} if there exists a bounded utility index $u_i: X \rightarrow \mathbb{R}$ such that for any lottery $P \in \Delta(X)$, the \emph{EU representation} $\int_X u_i(x)dP(x)$ represents $\succsim_i$.\footnote{We use the integral notation for simplicity, where $\int_X u_i(x)dP(x) := \sum_{x \in \text{supp}(P)}u_i(x)P(x)$} Let $\R$ denote the set of vNM-rational preferences over $\Delta(X)$.

So far this setting allows individuals to care about correlations in outcomes across individuals, i.e. have strict preferences between lotteries $P$ and $Q$ that have the same marginal distribution for each individual but differ in their joint distribution. Particularly, individuals may have ex-post distributional concerns, such as a preference for ex-post fairness.\footnote{See \cite{grant2012equally} for a characterization of Harsanyi's impartial observer with a preference for ex-post fairness.} We limit the scope of these distributional concerns with a condition adapted from \cite{fishburn1965independence} ---  we still allow individuals to care about correlations between outcomes of other individuals, but we assume that each individual doesn't care about correlations between \emph{their own} outcome and that of other individuals.

\begin{definition}
    $\succsim_i$ satisfies \emph{mutual independence of $X_i$ and $X_{-i}$} if for all $P, Q \in \Delta(X)$ for which $P_i = Q_i$ and $P_{-i} = Q_{-i}$, $P \sim_i Q$.
\end{definition}

For an individual whose preferences satisfy mutual independence, they may have a concern for ex-post fairness when considering other individuals, but not when considering themself among other individuals. Note that mutual independence does \emph{not} restrict individuals to be purely selfish --- they may still have rich other-regarding preferences. 

\cite{fishburn1965independence} shows that mutual independence of $X_i$ and $X_{-i}$ is equivalent to an additive utility index.\footnote{Fishburn's setting is not specific to other-regarding preferences. \cite{simon2016existence} applies Fishburn's result to other-regarding preferences.}

\begin{proposition} \label{prop1}
    \citep{fishburn1965independence} $\succsim_i$ satisfies mutual independence of $X_i$ and $X_{-i}$ if and only if there exist \emph{subutility indices }$u_{i,i}: X_i \rightarrow \mathbb{R}$ and $u_{i,-i}: X_{-i} \rightarrow \mathbb{R}$ such that $\succsim_i$ is represented by $u_i(x) = u_{i,i}(x_i) + u_{i,-i}(x_{-i})$ for any $x \in X$. Furthermore, for a given utility index $u_i$, subutility indices $u_{i,i}$ and $u_{i,-i}$ are unique up to simultaneous transformations $u_{i,i}' = u_{i,i} + c$ and $u_{i,-i}' = u_{i,-i} - c$ for any $c \in \mathbb{R}$.
\end{proposition}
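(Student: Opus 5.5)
The proof has three parts: sufficiency, necessity, and uniqueness. Throughout, fix a utility index $u_i$ representing $\succsim_i$. Because the EU representation is linear in $P$, every statement about indifference reduces to equality of expected values of $u_i$.

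\textbf{Sufficiency.} Suppose $u_i(x) = u_{i,i}(x_i) + u_{i,-i}(x_{-i})$. Then for any $P$,
\[
\int_X u_i \, dP = \int_{X_i} u_{i,i} \, dP_i + \int_{X_{-i}} u_{i,-i} \, dP_{-i}.
\]
This depends only on the marginals $P_i$ and $P_{-i}$. Hence $P_i = Q_i$ and $P_{-i} = Q_{-i}$ imply $P \sim_i Q$.

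\textbf{Necessity.} Assume mutual independence and fix a reference point $(x_i^0, x_{-i}^0) \in X$. For arbitrary $x = (x_i, x_{-i})$, compare two lotteries:
\begin{itemize}
\item $P$ puts probability $1/2$ on $(x_i, x_{-i})$ and $1/2$ on $(x_i^0, x_{-i}^0)$;
\item $Q$ puts probability $1/2$ on $(x_i, x_{-i}^0)$ and $1/2$ on $(x_i^0, x_{-i})$.
\end{itemize}
Both have the same two marginals: $\tfrac12$ on $x_i$ and $\tfrac12$ on $x_i^0$ for coordinate $i$, and $\tfrac12$ on $x_{-i}$ and $\tfrac12$ on $x_{-i}^0$ for the rest. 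If $x_i = x_i^0$ or $x_{-i} = x_{-i}^0$ the two lotteries coincide, so the comparison is trivial. Mutual independence gives $P \sim_i Q$, so their expected utilities are equal:
\[
u_i(x_i, x_{-i}) = u_i(x_i, x_{-i}^0) + u_i(x_i^0, x_{-i}) - u_i(x_i^0, x_{-i}^0).
\]
Now define $u_{i,i}(x_i) := u_i(x_i, x_{-i}^0)$ and $u_{i,-i}(x_{-i}) := u_i(x_i^0, x_{-i}) - u_i(x_i^0, x_{-i}^0)$. The identity above is exactly $u_i = u_{i,i} + u_{i,-i}$. Both subutilities are bounded because $u_i$ is bounded.

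\textbf{Uniqueness.} Suppose $u_{i,i} + u_{i,-i} = u_{i,i}' + u_{i,-i}'$ pointwise. Then
\[
u_{i,i}(x_i) - u_{i,i}'(x_i) = u_{i,-i}'(x_{-i}) - u_{i,-i}(x_{-i})
\]
for all $x_i, x_{-i}$. The left side depends only on $x_i$ and the right side only on $x_{-i}$. Varying each coordinate separately shows both sides equal a single constant $c$. Conversely, any such shift by $c$ clearly preserves the sum.

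The only step needing real work is the necessity construction: choosing the $50$–$50$ pair of lotteries that share both marginals. Once that pair is in hand, the rest is direct.
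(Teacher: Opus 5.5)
Your proof is correct and is essentially the paper's argument. The paper attributes Proposition~\ref{prop1} to Fishburn (1965), but in proving its subjective analogue, Proposition~\ref{prop2}, it uses exactly your reference point $(x_i^*, x_{-i}^*)$ and your pair of $50$--$50$ lotteries with identical marginals to obtain $u_i(x_i,x_{-i}) = u_i(x_i,x_{-i}^*) + u_i(x_i^*,x_{-i}) - u_i(x_i^*,x_{-i}^*)$, while your converse direction and uniqueness argument (which correctly relies on $X$ being the full product $X_i \times X_{-i}$) fill in steps the paper omits or only asserts.
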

    
We interpret each individual $i$'s subutility index $u_{i,i}$ as capturing their \emph{self-regarding} preferences over lotteries over their own outcomes in $X_i$, $u_{i,-i}$ to capture their \emph{other-regarding} preferences over lotteries over the outcomes of others in $X_{-i}$. Let $\succsim_{i,i}$ be the self-regarding preferences over $\Delta(X_i)$ induced by self-regarding utility $u_{i,i}$. This representation aligns with commonly assumed additive utility functions in the literature on other-regarding preferences \citep{sobel2005interdependent}.

Mutual independence merely requires indifference between lotteries with shared marginal distributions, but combined with vNM-rationality, it requires that an individual's self-regarding preferences are separable from their other-regarding preferences \emph{for all lotteries}.

We now turn to social preferences. Let $\succsim_0 \subseteq \Delta(X) \times \Delta(X)$ be society's preferences. In the spirit of \cite{harsanyi1955cardinal}, we assume that society is vNM-rational. We now introduce our first condition for social aggregation:

\begin{definition}
    Social preferences $\succsim_0$ satisfies \emph{(individual) sovereignty} if for any $P, Q \in \Delta(X)$ where $P_{-i} = Q_{-i}$, $P \succsim_0 Q$ if and only if $P \succsim_i Q$.
\end{definition}

That is, between two lotteries for which only individual $i$'s ex-ante outcome (potentially) differs, society's preference should defer to individual $i$'s preference. Say that individual $i$ \emph{dictates} the social preference between $P$ and $Q$. 

Note that not only do lotteries $P$ and $Q$ share the same marginal distribution over outcomes for each individual $j$ other than $i$ ($P_j = Q_j$ for all $j \neq i$), but they also share the same joint distribution over $X_{-i}$. So if individual $i$'s preferences satisfy mutual independence, their other-regarding preferences must be indifferent between $P_{-i}$ and $Q_{-i}$, even if individual $i$ cares about ex-post fairness among other individuals. Thus only $i$'s self-regarding preferences determine $i$'s preference between $P$ and $Q$.

Say that social preferences $\succsim_0$ are \emph{(ex-ante) paternalistic} if $\succsim_0$ doesn't satisfy sovereignty. That is, there exist pairs of lotteries between which only the ex-ante outcome for individual $i$ differs, and, if given a choice between the two, society and individual $i$ would choose differently. Note that society can agree with individual $i$ on the preference order of \emph{ex-post} outcomes for $i$ and yet still be ex-ante paternalistic, e.g. both a car passenger and society prefer to avoid serious injury in an accident, but society might impose a more risk-averse attitude on the passenger and require them to wear a seatbelt.

Loosely, sovereignty is a strengthening of \citeauthor{sen1970impossibility}'s \citeyearpar{sen1970impossibility} Liberalism, which requires only that there exists a pair of alternatives for each individual over which they have dictatorial power.\footnote{Sen's Liberalism is defined in a social choice setting without uncertainty, where the social outcome is not expressed as a profile of individual outcomes.} Sovereignty requires that each individual $i$ dictates the social preference between all pairs of lotteries that hold the marginal outcomes of others fixed. The strength of sovereignty pins down the functional form of the social utility index:

\begin{theorem} \label{theorem1}
    Suppose each $\succsim_i$ satisfies mutual independence of $X_i$ and $X_{-i}$. $\succsim_0$ satisfies individual sovereignty if and only if $u_0$ is a positive linear combination of $\{u_{i,i}\}_{i \in I}$.
\end{theorem}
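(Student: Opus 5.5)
The plan is to reduce sovereignty to two facts about the social utility index $u_0$: it is additively separable across individuals, and its $i$-th component is a positive affine transformation of $u_{i,i}$. Throughout, ``positive linear combination'' is understood up to the additive constant permitted by vNM uniqueness. The converse direction is a direct check. If $u_0 = \sum_j a_j u_{j,j} + b$ with $a_j>0$, take $P,Q$ with $P_{-i}=Q_{-i}$. Then every term with $j \neq i$ has equal expectation under $P$ and $Q$. By Proposition \ref{prop1}, $u_{i,-i}$ also has equal expectation under $P$ and $Q$. So both $\succsim_0$ and $\succsim_i$ compare $P$ and $Q$ by the expectation of $u_{i,i}$ alone, with $a_i>0$, and they coincide.

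For necessity, the first step is to show that $u_0$ is additively separable in $X_i$ and $X_{-i}$ for each $i$. Fix $x,y \in X$ and let
\[
P=\tfrac12\delta_{(x_i,x_{-i})}+\tfrac12\delta_{(y_i,y_{-i})}, \qquad Q=\tfrac12\delta_{(y_i,x_{-i})}+\tfrac12\delta_{(x_i,y_{-i})}.
\]
Then $P_{-i}=Q_{-i}$ and $P_i=Q_i$. Mutual independence gives $P\sim_i Q$, and sovereignty then gives $P\sim_0 Q$. Hence
\[
u_0(x)+u_0(y)=u_0(y_i,x_{-i})+u_0(x_i,y_{-i}),
\]
which is exactly Fishburn's condition applied to $\succsim_0$. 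Proposition \ref{prop1} applied to $\succsim_0$ then yields $u_0(x)=v_i(x_i)+w_i(x_{-i})$ for every $i$.

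The second step, which I expect to be the main (though elementary) obstacle, is to pass from separability for each $i$ to full additivity $u_0(x)=\sum_i v_i(x_i)+b$. I would fix a base point $z\in X$ and telescope:
\[
u_0(x)-u_0(z)=\sum_{k=1}^n \bigl[u_0(x_1,\dots,x_k,z_{k+1},\dots,z_n)-u_0(x_1,\dots,x_{k-1},z_k,\dots,z_n)\bigr].
\]
Consider the $k$-th bracket. Separability in $X_k$ versus $X_{-k}$ gives that the $w_k$ terms cancel, so the bracket equals $v_k(x_k)-v_k(z_k)$ and depends only on $x_k$. This proves full additivity with $v_i$ defined up to constants.

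The final step identifies each $v_i$. Fix $z_{-i}$ and restrict attention to lotteries of the form $p\times\delta_{z_{-i}}$ with $p\in\Delta(X_i)$. Any two such lotteries have the same marginal on $X_{-i}$, so sovereignty says $\succsim_0$ and $\succsim_i$ agree on this family. On this family $\succsim_0$ is represented by the expectation of $v_i$, and $\succsim_i$ by the expectation of $u_{i,i}$, using Proposition \ref{prop1}. So $v_i$ and $u_{i,i}$ represent the same vNM preference on $\Delta(X_i)$. By vNM uniqueness, $v_i=a_iu_{i,i}+b_i$ with $a_i>0$. The degenerate case is when $u_{i,i}$ is constant; then $v_i$ is constant too, and any positive $a_i$ works. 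Summing gives $u_0=\sum_i a_iu_{i,i}+b$, as claimed.
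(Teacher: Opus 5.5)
Your proof is correct and follows the paper's argument. Sovereignty plus each individual's mutual independence forces the Fishburn 50--50 indifferences on $\succsim_0$, which makes $u_0$ additive; sovereignty on lotteries with a fixed $X_{-i}$ marginal, together with vNM uniqueness, then shows each additive component is a positive affine transform of $u_{i,i}$. The differences are organizational: you get full additivity by telescoping from separability in every coordinate instead of the paper's nested decomposition of $u_{0,-1}$ (a cleaner way to do its ``iterate for each $i$'' step), and you also verify the converse direction, which the paper omits.
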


Say that such $\succsim_0$ is \emph{self-regarding utilitarian}, as the social utility index is a weighted sum of each individual's self-regarding utilities.\footnote{\cite{breton1999separable} examines a social \emph{choice} function analogous to self-regarding utilitarianism. In a similar setting where social outcomes are a Cartesian product of individual outcomes, they show that strategyproofness is equivalent to a \emph{libertarian} social choice function, under which every individual dictates their outcome.}

We began in a setting where individuals can have other-regarding preferences, so long as these other-regarding preferences are mutually independent of their self-regarding preferences. Sovereignty requires that society ignores all other-regarding preferences when choosing between lotteries where only individual $i$'s ex-ante outcome differs. But it turns out that, given the vNM-rationality of society, society must also ignore all other-regarding preferences when choosing between any pair of lotteries, except in determining the weights it places on each self-regarding utility.

For those who are sympathetic to excluding other-regarding preferences from welfare analysis, sovereignty offers a normative principle for doing so. For instance, it is standard practice to model an individual's utility over just their own consumption. Also, in allocation and matching settings, standard mechanisms elicit reports that only contain preferences over an individual's allocation.\footnote{In certain settings like school choice, mutual independence of self-regarding and other-regarding preferences may not hold, since a student's preferences over schools could be determined by the eventual peers at their assigned school.}

For those who are sympathetic to including other-regarding preferences in social decision making and welfare analysis beyond the determination of weights, theorem \ref{theorem1} provides an argument against such a strong notion of liberalism as sovereignty.

\section{Geometric self-regarding utilitarianism} \label{geometric}

We now turn to the question of how society should make interpersonal comparisons of utility, i.e. determine the weights it places on each self-regarding utility. Our approach begins with the  observation that individuals already implicitly make interpersonal comparisons when they knowingly take actions that affect others, and particularly when they take altruistic actions at their own expense. For instance, any donation to a charity could have been donated to another charity. Supporting a policy with heterogeneous effects across a population is implicitly founded on some cost-benefit analysis. Society can take these other-regarding preferences as as input into how it should make interpersonal comparisons.

Our earlier axiom, individual sovereignty, is a \emph{single-profile axiom}, which restricts the output of the social welfare function on each preference profile independently. Pareto properties, which feature in versions of Harsanyi's utilitarianism, are also single-profile axioms. To characterize geometric self-regarding utilitarianism, we require several multi-profile axioms, each of which restricts the output of the social welfare function with a given input preference profile $\succsim_I$ in relation to its output with a different, related profile $\succsim_I'$. Many characterizations of other social welfare functions that determine interpersonal comparisons of utility, e.g. relative utilitarianism, similarly rely on multi-profile axioms (see papers in the related literature section). To that end we introduce and redefine some notation.

Let the set of potential individuals be countably infinite, hence denoted by the natural numbers $\mathbb{N}$. And let $\mathcal{N}$ be the set of nonempty, finite subsets of $\mathbb{N}$. Let $X_i$ be the set of individual outcomes for each $i \in \mathbb{N}$. From $\mathbb{N}$ we will consider two subsets --- a set of \emph{voters} $I \in \mathcal{N}$ whose preferences will be aggregated, and a finite, nonempty set of \emph{subjects} $J \in \mathcal{N}$ for whom voters have concern. Each voter $i \in I$ has preferences over (lotteries over) social outcomes for subjects $J$, i.e. $\succsim_i$ is over $\Delta(X_{j_1} \times \dots \times X_{j_m})$ for $J = \{j_1, \dots, j_m\}$. The social welfare function then aggregates the preferences of voters into a social preference, which is also over outcomes for subjects. We make no assumption on the sets of voters and subjects other than their finiteness --- $I$ and $J$ can be disjoint, have nonempty overlap, or be the same set. For illustration, while the democratic process of a country (coarsely) aggregates the preferences of its citizenry (voters), the citizens may have preferences regarding non-citizens (subjects).

Given our characterization in theorem \ref{theorem1}, in this section we assume social preferences are self-regarding utilitarian. We additionally assume that each \emph{voter} $i \in I$ is also self-regarding utilitarian with respect to subjects $J$. Formally, for voters $I$ and subjects $J$, $\succsim_I$ is self-regarding utilitarian with respect to $J$ if there exists subutility indices $\{u_{j,j}\}_{j \in J}$ such that each $\succsim_i$ is represented by $u_i$ that is a positive linear combination of $\{u_{j,j}\}_{j \in J}$. Say that these subutility indices $\{u_{j,j}\}_{j \in J}$ are \emph{consistent} with $\succsim_I$. Also, for any $I, K \in \mathcal{N}$, say that $\succsim_I$ and $\succsim_K'$ --- that are self-regarding utilitarian with respect to $J$ --- are \emph{co-consistent} if subutility indices $\{u_{j,j}\}_{j \in J}$ are \emph{consistent} with $\succsim_I$ and with $\succsim_K'$. For a set of subjects $J$, let $\R(J)$ be the set of self-regarding utilitarian preferences with respect to $J$, and let $\R = \bigcup_{J \in \mathcal{N}}\R(J)$. 

While there is precedent in the literature to assume that individuals are altruistic when voting and making social decisions, our assumption is strong.\footnote{From a normative angle, \cite{harsanyi1977morality} argues that individuals should adopt the viewpoint of an impartial observer when making social decisions --- particularly a utilitarian one. \cite{feddersen2006theory} appeals to the existence of ethical voters to explain voter turnout. \cite{flanigan2023distortion} shows that the existence of public-spirited voters reduces the distortion --- the cost of aggregating ordinal revealed preferences with respect to underlying cardinal preferences --- of social choice rules.} Each individual displays some degree of altruism towards \emph{every} other individual.  They may differ on the weights they place on others, which realistically could be quite small.\footnote{We occasionally refer to "weights" that individuals place on each other's self-regarding utility. Since we don't assume a particular normalization of self-regarding utilities, direct comparisons of weights are meaningless. For instance, it is meaningless to say that $i$ places a higher weight on $j$'s self-regarding utility than on $k$'s, since a different self-regarding utility representation could flip that assessment. On the other hand, comparisons between \emph{comparisons} of weights are meaningful. For instance, it is meaningful to say that $i$'s weight ratio of $j$'s to $k$'s self-regarding utility is higher than $\ell$'s weight ratio of $j$ to $k$.} Also, $i$'s altruism towards $j$, for risky outcomes, displays the same risk attitudes as $j$'s self-regarding preferences.\footnote{Note that this is a form of non-paternalistic altruism with regard to $j$'s self-regarding utility $u_{j,j}$, instead of their $u_{j}$ as assumed in other literature.} One interpretation applies the same normative argument from the previous section to individuals: each individual, when making decisions that affect others, does not want to infringe on any other individual's sovereignty.

This assumption grants that each individual $i$ has a \emph{rate of interpersonal substitution} between the self-regarding utility of any individual $j$ and that of any other individual $k$, which we will formally define later. Society, also a self-regarding utilitarian, has these rates of substitution as well. Geometric self-regarding utilitarian social preferences, to determine their rate of substitution between $j$ and $k$, takes the \emph{geometric mean} of the rates of substitution between $j$ and $k$ across all individuals $i$.

Lastly, assume that there exists at least one self-regarding preference that is not complete indifference.

In summary, a social choice problem $(\succsim_I,J)$ consists of a finite set of voters $I \in \mathcal{N}$, a finite set of subjects $J \in \mathcal{N}$, and a preference profile $\succsim_I$ of $I$, where for each $i \in I$, $\succsim_i$ is self-regarding utilitarian with respect to $J$. Let $\mathcal{D}$ be the set of social choice problems, and let $\Phi: \mathcal{D} \rightarrow \mathcal{R}$ be a self-regarding utilitarian social welfare function, where for any $(\succsim_I,J) \in \mathcal{D}$, $\Phi(\succsim_I,J) \in \R(J)$ is co-consistent with $\succsim_I$.

Now we present geometric self-regarding utilitarianism. Consider any $(\succsim_I,J) \in \mathcal{D}$. Fix any self-regarding utilities $\{u_{j,j}\}_{j \in J}$ consistent with $\succsim_I$. Since for each $i \in I$, $\succsim_i$ is self-regarding utilitarian, express $u_i$ such that $u_i(x) = \sum_{j \in J} \alpha_{i,j} u_{j,j}(x_j)$, where each $\alpha_{i,j} > 0$. $\Phi: \mathcal{D} \rightarrow \mathcal{R}$ is \emph{geometric self-regarding utilitarian} if the social utility index of $\Phi(\succsim_I,J)$ has the following form: $u_0(x) = \sum_{j \in J} \left(\Pi_{i\in I}\alpha_{i,j}\right)^{\frac{1}{\vert I \vert}}u_{j,j}(x_j)$. Note that the definition of geometric self-regarding utilitarian preferences is invariant to the choice of self-regarding utilities $\{u_{j,j}\}_{j \in J}$ and utilities $\{u_i\}_{i \in I}$.\footnote{Recall that for each $j \in J$, $u_{j,j}$ is unique up to a positive affine transformation. Consider a particular individual $k \in J$ and $u_{k,k}' = b u_{k,k} + c$ where $b > 0$. Then, for each $i$, $u_i(x) = \alpha_{i,k}' u_{k,k}'(x_k ) + \sum_{j \neq k} \alpha_{i,j} u_{j,j}$, where each $\alpha_{i,k}' = \frac{\alpha_{i,k}}{b}$. The social utility index is then 

\begin{align*}
    u_0(x) & = \left(\Pi_{i\in I}\alpha_{i,k}'\right)^{\frac{1}{\vert I \vert}}u_{k,k}'(x_k) + \sum_{j \neq k} \left(\Pi_{i\in I}\alpha_{i,j}\right)^{\frac{1}{\vert I \vert}}u_{j,j}(x_j) \\
    & = \left(\Pi_{i\in I}\frac{\alpha_{i,k}}{b}\right)^{\frac{1}{\vert I \vert}}(bu_{k,k}(x_k) + c) + \sum_{j \neq k} \left(\Pi_{i\in I}\alpha_{i,j}\right)^{\frac{1}{\vert I \vert}}u_{j,j}(x_j)
\end{align*}

The $b$'s cancel and the constant term can be dropped since the utility index is unique up to positive affine transformations.

Similarly, for each $i \in I$, $u_i$ is unique up to positive affine transformation. Consider a particular individual $k \in I$ and let $u_k' = b u_k + c$, where $b > 0$. Then $\alpha_{k,j}' = b \alpha_{k,j}$ for all $j \in J$, and so $u_0' = b^{\frac{1}{|I|}} u_0$. Since $u_0$ is unique up to positive affine transformation, $u_0'$ represents the same preferences.}

Now we present a characterization. Our axioms are closely related to a characterization of the geometric mean by \cite{aczel1983procedures}. We translate and interpret many of their axioms in this preference aggregation setting with other-regarding preferences.

The first two axioms --- representative consistency and homogeneity in interpersonal comparisons --- are central to the characterization in that many other social welfare functions fail one of these axioms.

Representative consistency, introduced by \cite{chambers2008consistent}, concerns the independence of aggregating preferences in multiple steps with respect to a partition of voters.\footnote{In the setting of ordering utility vectors, \cite{blackorby1984social} presents a related and stronger axiom "population substitution principle."}\footnote{See \cite{young1995equity} and \cite{thomson1996consistent} for reviews of various consistency conditions in the literature, and \cite{brandl2016consistent} and \cite{berker2025designing} for recent work.} For illustration, let's aggregate $(\succsim_I, J)$ in two steps. First pick any partition $\{K_1, \dots, K_\ell\}$ of the voters $I$, and aggregate the preferences of individuals in each part $K_k$. Now consider the preference profile $\succsim_I'$, where for each part $K_k$, the preferences of each individual $i \in K_k$ are replaced by the aggregate preferences of $K_k$. Then aggregate the altered profile $\succsim_I'$. Representative consistency requires that the resulting social preference of this two-step procedure is the same as that of directly aggregating preference profile $\succsim_I$.

\begin{definition}
    $\Phi$ is representative consistent if for any $(\succsim_I, J) \in \mathcal{D}$ and any partition $\{K_1, \dots, K_\ell\}$ of $I$, 
    
    $\Phi(\succsim_I, J) = \Phi((\underbrace{\Phi(\succsim_{K_1}, J), \dots, \Phi(\succsim_{K_1}, J)}_{i \in K_1}, \dots, \underbrace{\Phi(\succsim_{K_\ell}, J), \dots, \Phi(\succsim_{K_\ell}, J)}_{i \in K_\ell}), J)$
\end{definition}

Taking preference aggregation as a theoretical benchmark for political systems, representative consistency captures an intuitive desideratum for a representative democracy --- that aggregation should be immune to gerrymandering. A direct democracy, where every individual votes on every issue, is impractical to implement. A representative democracy asks groups of individuals to each elect a representative to represent the preferences of that group, e.g. members of a legislature. The representatives, each weighted by the size of the constituency that they represent, then collectively decide social policy. It seems desirable that the outcome determined by the body of representatives matches the outcome that a direct democracy would have determined --- particularly in this setting where uncertainty is objective, and so representatives are not more informed than their constituencies. Moreover, the outcome is independent of the partition of individuals --- gerrymandering has no influence on the outcome.

Before we present the next axiom, we must define an individual's rate of interpersonal substitution between pairs of outcomes for two other individuals. For any $(\succsim_I, J) \in \mathcal{D}$, $i \in I$, $j\neq k \in J$ ($i$ and $j$ need not be distinct --- similarly for $i$ and $k$), any $x_j, y_j \in X_j$ and $x_k, y_k \in X_k$ for which $u_{j,j}(x_j) > u_{j,j}(y_j)$ and $u_{k,k}(x_k) > u_{k,k}(y_k)$ for $\{u_{\ell, \ell}\}_{\ell \in J}$ consistent with $\succsim_I$, let $\mu_{j,k}^i((x_j, y_j), (x_k, y_k))$ be $i$'s rate of interpersonal substitution of the difference between $x_j$ and $y_j$ and that between $x_k$ and $y_k$, defined as follows: since $\succsim_i$ is self-regarding utilitarian, $(x_j, x_k, x_{-jk}) \succ_i (x_j, y_k, x_{-jk}) \succ_i (y_j, y_k, x_{-jk})$ for any $x_{-jk}$. Since vNM preferences are continuous, there exists an $\alpha \in (0,1)$ such that the following indifference holds: $(\alpha(x_j, x_k, x_{-jk}) + (1-\alpha)(y_j, y_k, x_{-jk})) \sim_i (x_j, y_k, x_{-jk})$. Similarly, $(x_j, x_k, x_{-jk}) \succ_i (y_j, x_k, x_{-jk}) \succ_i (y_j, y_k, x_{-jk})$ for any $x_{-jk}$, and there exists a $\beta \in (0,1)$ such that $(\beta(x_j, x_k, x_{-jk}) + (1-\beta)(y_j, y_k, x_{-jk})) \sim_i (y_j, x_k, x_{-jk})$. Let $\mu_{j,k}^i((x_j, y_j), (x_k, y_k)) = \frac{\alpha}{\beta}$.

\begin{figure} 
    \centering
    \begin{tikzpicture}[>=stealth]

  % --- Coordinate definitions ---
  % Vertical extent: y=0 (bottom) to y=3.6 (top)  [reduced from 6]
  % Left bar x-position: 1.4,  Right bar x-position: 3.2  [closer together]

  \def\ybot{0}
  \def\ytop{3.6}
  \def\xleft{1.4}    % x-position of left bar
  \def\xright{3.2}   % x-position of right bar
  \def\xcenter{2.3}  % horizontal center for top/bottom labels

  % Heights of the two middle nodes
  \def\yleftnode{2.4}   % 2/3 of 3.6
  \def\yrightnode{1.2}  % 1/3 of 3.6

  % --- Vertical bars ---
  % Left bar
  \draw[thick] (\xleft, \ybot) -- (\xleft, \ytop);
  % Right bar
  \draw[thick] (\xright, \ybot) -- (\xright, \ytop);

  % --- Tick marks at the node heights on each bar ---
  \draw[thick] (\xleft-0.1, \yleftnode) -- (\xleft+0.1, \yleftnode);
  \draw[thick] (\xright-0.1, \yrightnode) -- (\xright+0.1, \yrightnode);

  % --- Alpha brace/label: bottom section of left bar (0 to 2/3) ---
  \draw[<->, thick, gray]
    (\xleft-0.45, \ybot) -- (\xleft-0.45, \yleftnode)
    node[midway, left, black] {$\alpha$};

  % --- Beta brace/label: bottom section of right bar (0 to 1/3) ---
  \draw[<->, thick, gray]
    (\xright+0.45, \ybot) -- (\xright+0.45, \yrightnode)
    node[midway, right, black] {$\beta$};

  % --- Four coordinate labels ---

  % Top center
  \node[above] at (\xcenter, \ytop)
    {$(x_j,\, x_k,\, x_{-jk})$};

  % Bottom center
  \node[below] at (\xcenter, \ybot)
    {$(y_j,\, y_k,\, x_{-jk})$};

  % Middle left (2/3 height)
  \node[left=0.5cm] at (\xleft, \yleftnode)
    {$(x_j,\, y_k,\, x_{-jk})$};

  % Middle right (1/3 height)
  \node[right=0.5cm] at (\xright, \yrightnode)
    {$(y_j,\, x_k,\, x_{-jk})$};

\end{tikzpicture}
    \caption{Illustration of the rate of interpersonal substitution between pairs of outcomes.}
    \label{substitutionfig}
\end{figure}

For example, suppose $\alpha = 2/3$ and $\beta = 1/3$. Then, for any utility index that represents $\succsim_i$, the utility difference between $(x_j, x_k, x_{-jk})$ and $(y_j, y_k, x_{-jk})$ is $3/2$ times that between $(x_j, y_k, x_{-jk})$ and $(y_j, y_k, x_{-jk})$. Similarly, the utility difference between $(x_j, x_k, x_{-jk})$ and $(y_j, y_k, x_{-jk})$ is three times that between $(y_j, x_k, x_{-jk})$ and $(y_j, y_k, x_{-jk})$. This implies that the utility difference between $(x_j, y_k, x_{-jk})$ and $(y_j, y_k, x_{-jk})$ is twice that between $(y_j, x_k, x_{-jk})$ and $(y_j, y_k, x_{-jk})$, which corresponds with $\mu_{j,k}^i((x_j, y_j), (x_k, y_k)) = 2$. Figure \ref{substitutionfig} illustrates this particular example.

Society's rates of substitution --- which exist since we assumed social preferences are also self-regarding utilitarian --- are defined in the same way. Denote $\mu_{j, k}^{\Phi(\succsim_I, J)}((x_j, y_j), (x_k, y_k))$ as $\Phi(\succsim_I, J)$'s rate of interpersonal substitution.

Homogeneity in interpersonal comparisons requires that between two preference profiles of the same group, if every individual changes their rate of substitution between pairs of outcomes for individuals $j$ and $k$ by a common factor, then the social rate of substitution between those pairs of outcomes changes by the same factor.

\begin{definition}
    $\Phi$ is homogeneous (in interpersonal comparisons) if for any co-consistent $(\succsim_I, J), (\succsim_I', J)\in \mathcal{D}$ such that there exists $j,k \in I$, $x_j, y_j \in X_j$, $x_k, y_k \in X_k$, $b > 0$ for which $\mu_{j,k}^i((x_j, y_j), (x_k, y_k)) = b\mu_{j,k}^{i\prime}((x_j, y_j), (x_k, y_k))$ for all $i \in I$, it's the case that $\mu_{j,k}^{\Phi(\succsim_I, J)}((x_j, y_j), (x_k, y_k)) = b\mu_{j,k}^{\Phi(\succsim_I', J)}((x_j, y_j), (x_k, y_k))$.
\end{definition}

\begin{figure} 
    \centering
    \begin{tikzpicture}[scale=1.2]
    % Define bar width and spacing
    \def\barwidth{0.3}
    \def\spacing{2}
    \def\rowsep{3}
    
    % ========== FIRST ROW (on top) - compressed to half height ==========
    
    % Column 1 (first bar twice as long)
    \def\xone{0}
    \draw[fill=black!30, thick] (\xone, \rowsep) rectangle (\xone + \barwidth, \rowsep + 2);
    \draw[fill=black!60, thick] (\xone + \barwidth + 0.1, \rowsep) rectangle (\xone + 2*\barwidth + 0.1, \rowsep + 1);
    \node at (\xone + \barwidth + 0.05, \rowsep - 0.5) {$\succsim_1$};
    \node[left] at (\xone, \rowsep + 2) {$x_j$};
    \node[left] at (\xone, \rowsep) {$y_j$};
    \node[right] at (\xone + 2*\barwidth + 0.1, \rowsep + 1) {$x_k$};
    \node[right] at (\xone + 2*\barwidth + 0.1, \rowsep) {$y_k$};
    
    % Column 2 (bars equal length)
    \def\xtwo{3}
    \draw[fill=black!30, thick] (\xtwo, \rowsep) rectangle (\xtwo + \barwidth, \rowsep + 1.5);
    \draw[fill=black!60, thick] (\xtwo + \barwidth + 0.1, \rowsep) rectangle (\xtwo + 2*\barwidth + 0.1, \rowsep + 1.5);
    \node at (\xtwo + \barwidth + 0.05, \rowsep - 0.5) {$\succsim_2$};
    \node[left] at (\xtwo, \rowsep + 1.5) {$x_j$};
    \node[left] at (\xtwo, \rowsep) {$y_j$};
    \node[right] at (\xtwo + 2*\barwidth + 0.1, \rowsep + 1.5) {$x_k$};
    \node[right] at (\xtwo + 2*\barwidth + 0.1, \rowsep) {$y_k$};
    
    % Column 3 (first bar four times as long)
    \def\xthree{6}
    \draw[fill=black!30, thick] (\xthree, \rowsep) rectangle (\xthree + \barwidth, \rowsep + 2);
    \draw[fill=black!60, thick] (\xthree + \barwidth + 0.1, \rowsep) rectangle (\xthree + 2*\barwidth + 0.1, \rowsep + 0.5);
    \node at (\xthree + \barwidth + 0.05, \rowsep - 0.5) {$\succsim_3$};
    \node[left] at (\xthree, \rowsep + 2) {$x_j$};
    \node[left] at (\xthree, \rowsep) {$y_j$};
    \node[right] at (\xthree + 2*\barwidth + 0.1, \rowsep + 0.5) {$x_k$};
    \node[right] at (\xthree + 2*\barwidth + 0.1, \rowsep) {$y_k$};
    
    % Column 0 (first bar twice as long)
    \def\xzero{9}
    \draw[fill=black!30, thick] (\xzero, \rowsep) rectangle (\xzero + \barwidth, \rowsep + 2);
    \draw[fill=black!60, thick] (\xzero + \barwidth + 0.1, \rowsep) rectangle (\xzero + 2*\barwidth + 0.1, \rowsep + 1);
    \node at (\xzero + \barwidth + 0.05, \rowsep - 0.5) {$\Phi(\succsim_I, J)$};
    \node[left] at (\xzero, \rowsep + 2) {$x_j$};
    \node[left] at (\xzero, \rowsep) {$y_j$};
    \node[right] at (\xzero + 2*\barwidth + 0.1, \rowsep + 1) {$x_k$};
    \node[right] at (\xzero + 2*\barwidth + 0.1, \rowsep) {$y_k$};
    
    % Arrow from column 3 to column 0
    \draw[->, very thick, black] (\xthree + \barwidth + 0.8, \rowsep + 1.25) -- (\xzero - 0.5, \rowsep + 1.25);
    
    % ========== SECOND ROW (on bottom) - first bars halved, second bars stay same ==========
    
    % Column 1' (first bar: 2/2 = 1, second bar: stays 1, so ratio 1:1)
    \draw[fill=black!30, thick] (\xone, 0) rectangle (\xone + \barwidth, 1);
    \draw[fill=black!60, thick] (\xone + \barwidth + 0.1, 0) rectangle (\xone + 2*\barwidth + 0.1, 1);
    \node at (\xone + \barwidth + 0.05, -0.5) {$\succsim_1'$};
    \node[left] at (\xone, 1) {$x_j$};
    \node[left] at (\xone, 0) {$y_j$};
    \node[right] at (\xone + 2*\barwidth + 0.1, 1) {$x_k$};
    \node[right] at (\xone + 2*\barwidth + 0.1, 0) {$y_k$};
    
    % Column 2' (first bar: 1.5/2 = 0.75, second bar: stays 1.5, so ratio 0.75:1.5 = 1:2)
    \draw[fill=black!30, thick] (\xtwo, 0) rectangle (\xtwo + \barwidth, 0.75);
    \draw[fill=black!60, thick] (\xtwo + \barwidth + 0.1, 0) rectangle (\xtwo + 2*\barwidth + 0.1, 1.5);
    \node at (\xtwo + \barwidth + 0.05, -0.5) {$\succsim_2'$};
    \node[left] at (\xtwo, 0.75) {$x_j$};
    \node[left] at (\xtwo, 0) {$y_j$};
    \node[right] at (\xtwo + 2*\barwidth + 0.1, 1.5) {$x_k$};
    \node[right] at (\xtwo + 2*\barwidth + 0.1, 0) {$y_k$};
    
    % Column 3' (first bar: 2/2 = 1, second bar: stays 0.5, so ratio 1:0.5 = 2:1)
    \draw[fill=black!30, thick] (\xthree, 0) rectangle (\xthree + \barwidth, 1);
    \draw[fill=black!60, thick] (\xthree + \barwidth + 0.1, 0) rectangle (\xthree + 2*\barwidth + 0.1, 0.5);
    \node at (\xthree + \barwidth + 0.05, -0.5) {$\succsim_3'$};
    \node[left] at (\xthree, 1) {$x_j$};
    \node[left] at (\xthree, 0) {$y_j$};
    \node[right] at (\xthree + 2*\barwidth + 0.1, 0.5) {$x_k$};
    \node[right] at (\xthree + 2*\barwidth + 0.1, 0) {$y_k$};
    
    % Column 0' (first bar: 2/2 = 1, second bar: stays 1, so ratio 1:1)
    \draw[fill=black!30, thick] (\xzero, 0) rectangle (\xzero + \barwidth, 1);
    \draw[fill=black!60, thick] (\xzero + \barwidth + 0.1, 0) rectangle (\xzero + 2*\barwidth + 0.1, 1);
    \node at (\xzero + \barwidth + 0.05, -0.5) {$\Phi(\succsim_I', J)$};
    \node[left] at (\xzero, 1) {$x_j$};
    \node[left] at (\xzero, 0) {$y_j$};
    \node[right] at (\xzero + 2*\barwidth + 0.1, 1) {$x_k$};
    \node[right] at (\xzero + 2*\barwidth + 0.1, 0) {$y_k$};
    
    % Arrow from column 3' to column 0'
    \draw[->, very thick, black] (\xthree + \barwidth + 0.8, 0.75) -- (\xzero - 0.5, 0.75);
    
\end{tikzpicture}
    \caption{Illustration of the homogeneity in interpersonal comparisons axiom.}
    \label{homogeneityfig}
\end{figure}

It's best to illustrate homogeneity through an example. Suppose that we are in a school choice setting, where each individual's outcome is the school they are assigned to. Student $j$ can either go to school $x_j$ or school $y_j$. School $x_j$ is closer to $j$'s home than school $y_j$, and so $j$ prefers to attend $x_j$ over $y_j$. For student $k$, school $x_k$ has a particular academic program that school $y_k$ lacks, and so $x_k \succ_{k,k} y_k$. We'd like to aggregate preferences for some individuals 1, 2, and 3, who care about which schools $j$ and $k$ attend. The individuals are self-regarding utilitarian, so their preferences over $j$ and $k$'s outcome agree with $j$ and $k$'s self-regarding preferences, but they place different weights on the two students' preferences. Namely, individual 1 thinks it's twice as important that $j$ attends $x_j$ instead of $y_j$ as that $k$ attends $x_k$ instead of $y_k$ (which corresponds to $\mu_{j,k}^1((x_j, y_j), (x_k, y_k))) = 2$), individual 2 thinks they are equally important, and individual 3 thinks $j$'s preference is four times as important. Note that these differences in the individuals' other-regarding preferences can reflect a multitude of considerations --- perhaps individual 3 believes the convenience for student $j$ of going to school $x_j$ greatly outweighs the merits of the particular academic program that $x_k$ offers, or perhaps 3 knows $j$ personally and so weighs their concerns more heavily. 

Suppose that everyone learns that $j$ has a car and can drive to school instead of taking public transit as they had previously assumed. Each individual's concern for $j$ changes, and each happens to halve the importance that $j$ attends $x_j$ instead of $y_j$. Homogeneity requires that society also halve the importance of $j$ attending $x_j$ instead of $y_j$. For instance, if society deemed $j$'s preference twice as important as $k$'s for the first preference profile, then society must now deem them equally important for the revised preference profile.\footnote{In this example we described the impetus for the change in preferences as gaining information. Here, information is not explicitly modeled. Realistically, changes in other-regarding preferences have a variety of interpretations depending on the setting.} Figure \ref{homogeneityfig} illustrates this discussion.

The next axiom concerns how social preferences respond to a change in a single individual's concern for another individual. If $i^*$ increases their valuation of the difference between $j^*$'s outcomes $x_{j^*}$ and $y_{j^*}$ relative to all other pairs of outcomes for all other individuals, then society should also increase its valuation, all else equal.

\begin{definition}
    $\Phi$ is monotone (in interpersonal comparisons) if for any co-consistent $(\succsim_I, J), (\succsim_I', J) \in \mathcal{D}$, $i^* \in I$, $j^* \in J$, $x_{j^*}, y_{j^*} \in X_{j^*}$ for which $\succsim_{I \backslash i^*} = \succsim_{I \backslash i^*}'$ and
    \begin{itemize}
        \item[i.] $\mu_{j^*, k}^{i^*}((x_{j^*}, y_{j^*}), (x_k, y_k)) < \mu_{j^*, k}^{i^*\prime}((x_{j^*}, y_{j^*}), (x_k, y_k))$ for all $k \in J\backslash j^*$, $x_k, y_k \in X_k$, and
        \item[ii.] $\mu_{j, k}^{i^*}((x_j, y_j), (x_k, y_k)) = \mu_{j, k}^{i^*\prime}((x_j, y_j), (x_k, y_k))$ for all $j,k \in J\backslash j^*$, $x_j, y_j \in X_j$, $x_k, y_k \in X_k$
    \end{itemize} 
    it's the case that $\mu_{j^*, k}^{\Phi(\succsim_I, J)}((x_{j^*}, y_{j^*}), (x_k, y_k))  < \mu_{{j^*},k}^{\Phi(\succsim_I', J)}((x_{j^*}, y_{j^*}), (x_k, y_k))$ for all $k \in J\backslash j^*$, $x_k, y_k \in X_k$.
\end{definition}

In the aforementioned example about school choice, suppose only individual 1 learns that $j$ has a car, and so only 1 dampens their concern for $j$'s outcomes. Monotonicity requires that society also decreases its concern for $j$'s outcome.

The remaining axioms are standard. Faithfulness requires that if every individual has the same preferences, then society should also have those preferences.

\begin{definition}
    $\Phi$ is faithful if for any $\succsim$ that is self-regarding utilitarian with respect to $J$ and any set of voters $I \in \mathcal{N}$, $\Phi((\underbrace{\succsim, \dots, \succsim}_{i \in I }), J) = \succsim$.
\end{definition}

Note that strong Pareto implies faithfulness.\footnote{$\Phi$ satisfies strong Pareto if for any $(\succsim_I, J) \in \mathcal{D}$, any $P, Q \in \Delta(X_{j_1} \times \dots \times X_{j_m})$, (i) if $P\succsim_i Q$ for all $i \in I$, then $P\succsim_{\Phi(\succsim_I, J)} Q$ and (ii) if $P\succsim_i Q$ for all $i \in I$ and $P\succ_j Q$ for some $j \in I$, then $P \succ_{\Phi(\succsim_I, J)} Q$.} In fact, variants of Harsanyi's aggregation theorem tells us that strong Pareto implies utilitarianism --- that the social utility is a positive linear combination of individuals' utilities (as opposed to their self-regarding utilities) (\cite{weymark1993harsanyi}, \cite{de1995note}). Thus geometric self-regarding utilitarianism, a distinct theory, doesn't satisfy strong Pareto.\footnote{To see this, suppose $I = J$ and $n=3$. For fixed self-regarding utilities $\{u_{i,i}\}_{i = 1,2,3}$, suppose $u_1 = u_{1,1} + u_{2,2} + u_{3,3}$, $u_2 = u_{1,1} + u_{2,2} + \frac{1}{8}u_{3,3}$, and $u_3 = u_{1,1} + \frac{1}{8}u_{2,2} + u_{3,3}$. The corresponding geometric self-regarding utilitarian utility index is $u_0 = u_{1,1} + \frac{1}{2}u_{2,2} + \frac{1}{2}u_{3,3}$, which cannot be expressed as a non-negative linear combination of $u_1, u_2$, and $u_3$. By Harsanyi's aggregation theorem, it follows that social preferences do not satisfy Pareto weak preference, and hence do not satisfy strong Pareto either.}

Anonymity requires that the social welfare function disregards the identities of individual preferences.

\begin{definition}
    $\Phi$ is anonymous if for any $(\succsim_I, J),(\succsim_K', J) \in \mathcal{D}$ such that $\vert I \vert = \vert K \vert$ and there exists a bijection $\sigma: I \rightarrow K$ such that $\succsim_i = \succsim_{\sigma(i)}'$ for all $i \in I$, it's the case that $\Phi(\succsim_I, J) =\Phi(\succsim_K', J)$.
\end{definition}

Continuity requires that a small change in an individual's preferences results in a small change in social preferences. Let $\mathcal{U}$ be the set of utility indices such that $\sup_{x \in X}u_i(x) = 1$ and $\inf_{x \in X}u_i(x) = 0$. For any vNM-rational $\succsim$, there exists a unique $u_{\succsim} \in \mathcal{U}$ which represents $\succsim$. Denote $\E(u_{\succsim}(P))$ as the expected utility of lottery $P$ given utility index $u_{\succsim}$. Say that a sequence $\succsim^1, \succsim^2, \dots \in \mathcal{R}$ converges to $\succsim$ if the expected utility for every lottery converges, $\E(u_{\succsim^1}(P)), \E(u_{\succsim^2}(P)), \dots \rightarrow \E(u_{\succsim}(P))$ for all $P \in \Delta(X)$. For any $(\succsim_I, J) \in \mathcal{D}$, say that a sequence of preference profiles $\succsim_I^1, \succsim_I^2, \dots$ converges to $\succsim_I$ if $\succsim_i^1, \succsim_i^2, \dots$ converges to $\succsim_i$ for each $i \in I$.

\begin{definition}
    $\Phi$ is continuous if for every $(\succsim_I, J), (\succsim_I^1, J), (\succsim_I^2, J), \dots\in \mathcal{D}$, $\succsim_I^1, \succsim_I^2, \dots \rightarrow \succsim_I$ implies that $\Phi(\succsim_I^1, J), \Phi(\succsim_I^2, J), \dots \rightarrow \Phi(\succsim_I, J)$.
\end{definition}

\begin{theorem} \label{theoremgeom}
    Suppose that $|J| \geq 3$. $\Phi$ is representative consistent, homogeneous in interpersonal comparisons, monotone in interpersonal comparisons, faithful, anonymous, and continuous if and only if it is geometric self-regarding utilitarian.
\end{theorem}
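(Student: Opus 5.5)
Fix a social choice problem $(\succsim_I,J)$ and consistent self-regarding utilities $\{u_{j,j}\}_{j\in J}$. The plan is to reduce everything to a problem about aggregating positive weight vectors modulo scale, and then invoke an Aczél--Alsina style characterization of the geometric mean.

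\emph{Reduction to weight ratios.} For each voter $i$, write $u_i=\sum_j \alpha_{i,j}u_{j,j}$. A direct computation from the definition of $\mu$ gives
\[
\mu^i_{j,k}((x_j,y_j),(x_k,y_k))=\frac{\alpha_{i,j}\,[u_{j,j}(x_j)-u_{j,j}(y_j)]}{\alpha_{i,k}\,[u_{k,k}(x_k)-u_{k,k}(y_k)]}.
\]
The same formula holds for society with weights $w_j$. Hence $\Phi(\succsim_I,J)$ is determined by the ratios $r_{jk}=w_j/w_k$, and each individual's rate is $\alpha_{i,j}/\alpha_{i,k}$ times a factor common to all voters and society.

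Fix a reference subject $k_0\in J$. Society's weight vector can then be written as a function
\[
F_I\bigl((\alpha_{i,\cdot}/\alpha_{i,k_0})_{i\in I}\bigr)\in\mathbb{R}_{++}^{J\setminus k_0}.
\]
Since co-consistency lets us vary the $\alpha$'s freely while holding the $u_{j,j}$ fixed, this function is well defined. The assumption that some self-regarding preference is nontrivial lets us pick $u_{j,j}$ nonconstant, so ratios are observable. Anonymity makes $F_I$ symmetric and dependent only on $|I|$. Faithfulness gives $F(a,\dots,a)=a$. Continuity of $\Phi$ translates into continuity of $F$, because convergence of normalized expected utilities forces convergence of the weights.

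\emph{Coordinatewise separation.} This is the step I expect to be the main obstacle, and it is where $|J|\ge3$ and monotonicity enter. I need to show that the $j$-th coordinate of $F$ depends only on the voters' $j$-th coordinates $a_{i,j}=\alpha_{i,j}/\alpha_{i,k_0}$.
\begin{itemize}
\item Homogeneity applied to the pair $(j,k_0)$ means that rescaling all voters' $j$-coordinates by $b$ scales society's $r_{jk_0}$ by $b$.
\item Applying homogeneity to a pair $(j,k)$ with $k\ne k_0$ shows that changing the $j$-coordinates leaves $r_{kk_0}$ unchanged. Here $|J|\ge3$ provides the third subject to triangulate through, giving $r_{jk}=r_{jk_0}/r_{kk_0}$.
\end{itemize}
To go beyond common rescalings, I would combine homogeneity with representative consistency and symmetry. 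Monotonicity then supplies strict increase of each coordinate function $f_n(a_{1j},\dots,a_{nj})$.

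\emph{Identifying the mean.} With the reduction in hand, each coordinate function $f_n:\mathbb{R}_{++}^n\to\mathbb{R}_{++}$ has the following properties:
\begin{itemize}
\item it is symmetric, continuous and strictly increasing;
\item it is reflexive, i.e.\ $f_n(a,\dots,a)=a$;
\item it is homogeneous of degree one (from homogeneity);
\item it satisfies Kolmogorov--Nagumo associativity: replacing a block of arguments by copies of their mean does not change $f$ (from representative consistency).
\end{itemize}
By the Kolmogorov--Nagumo theorem, the first, second and fourth properties make $f_n$ a quasi-arithmetic mean $g^{-1}\bigl(\frac1n\sum_i g(a_i)\bigr)$ with $g$ continuous and strictly monotone. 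Homogeneity then restricts $g$ to either a power function or $\log$, by the classical result on homogeneous quasi-arithmetic means.

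The power means are ruled out by the reference-invariance requirement. The choice of $k_0$ is arbitrary, so $f(a/c)$ computed coordinatewise must be consistent with $f(a)/f(c)$ for ratios taken relative to another subject $k$; again this uses $|J|\ge3$. Only $g=\log$ satisfies this, which yields $w_j\propto\prod_i\alpha_{i,j}^{1/|I|}$.

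\emph{Converse.} Finally, I would verify that geometric self-regarding utilitarianism satisfies each axiom. This is routine: geometric means are continuous, symmetric, reflexive, multiplicative and associative, and the invariance to normalization is already shown in the footnote.
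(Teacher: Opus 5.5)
There is a gap at the coordinatewise-separation step, which you flag as the main obstacle but never establish. Your two bullets use homogeneity only for \emph{common} rescalings of a coordinate. ``Combine homogeneity with representative consistency and symmetry'' is not an argument. It is also not clear how representative consistency, which only relates a profile to profiles built from its own sub-aggregates, would control how society's $r_{jk_0}$ responds to non-uniform changes in other subjects' coordinates.

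The missing observation is that homogeneity's hypothesis constrains only the voters' rates for the single pair $(j,k)$ and the given outcome pairs, and $b=1$ is allowed. So if two co-consistent profiles give every voter the same $\mu^i_{j,k}((x_j,y_j),(x_k,y_k))$, society's $\mu_{j,k}$ is the same, however else the profiles differ. By your own formula for $\mu$, this says $w_j/w_k$ depends only on $(\alpha_{i,j}/\alpha_{i,k})_{i\in I}$. Taking $k=k_0$ gives the separation in one line, and this is exactly how the paper's proof begins (it is how $\phi_{j,k}$ is defined there). Your final reference-invariance step also silently needs this same independence for a pair $(j,k)$ with $j,k\neq k_0$; otherwise there is no reason society's $r_{jk}$ should be a function of the voters' ratios $a_{i,j}/a_{i,k}$. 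So, as written, the proposal uses the conclusion of the step it leaves open.

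With that fix, the rest of your argument is sound. It differs from the paper mainly in which characterization it invokes.
\begin{itemize}
\item The paper derives symmetry, consensus, reciprocality (via a third subject), associativity, homogeneity, monotonicity and continuity for $\phi_{j,k}$, then cites Acz\'el and Alsina.
\item You go through Kolmogorov--Nagumo to a quasi-arithmetic mean, then the classical result that homogeneous ones are power means or the geometric mean. Reference invariance then excludes $p\neq 0$. In the paper reciprocality plays that role, since $M_p(1/a)=1/M_{-p}(a)$.
\end{itemize}
Your reference-invariance identity actually makes the Kolmogorov--Nagumo and homogeneity steps unnecessary. Write $r_{jk_0}=h_j(a_{\cdot j})$ and $r_{jk}=g_{jk}(a_{\cdot j}/a_{\cdot k})$, so that $h_j(a)/h_k(c)=g_{jk}(a/c)$. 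Setting $c=\mathbf{1}$ and using reflexivity gives $h_j=g_{jk}$. Setting $a=\mathbf{1}$ then gives $h_j(ab)=h_j(a)h_j(b)$ on $\mathbb{R}^n_{>0}$. With continuity, symmetry and reflexivity, this forces the geometric mean directly. The paper's triangle identity $\phi_{j,k}(A)\phi_{k,\ell}(B)=\phi_{j,\ell}(AB)$ admits the same shortcut.

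A smaller point: the standing nontriviality assumption does not make every $u_{j,j}$ in a given problem nonconstant. So $k_0$ and the auxiliary third subject must be chosen among subjects with nonconstant self-regarding utility, and subjects with constant utility handled vacuously, as the paper does.
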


Making interpersonal comparisons from other-regarding preferences does raise a serious concern --- society inherits any prejudices that individuals might have towards others. For instance, an individual could assign lower weight to particular groups of people based on their race, gender, or class. Or someone with many friends or a large family could be weighed more highly than someone who lacks them.

Note that in settings with other-regarding preferences, this concern is not unique to geometric self-regarding utilitarianism. For instance, relative utilitarianism, which sums utility normalized to the unit interval, would also have this property, since each individual's utility reflects their other-regarding preferences. 

Suppose we take a descriptive interpretation of theorem \ref{theoremgeom} --- that geometric self-regarding utilitarianism is a metaphor for how existing political systems make interpersonal comparisons --- rather than a normative one. Then the concern about prejudice is a justification for the existence of parts of the government that aren't as directly responsive to preference aggregation as the legislature, such as the judicial system or the constitution, which can protect groups of people from prejudice.

Our result is related to other multiplicative social welfare functions in the literature. \cite{demeyer1971welfare} examines a setting where the \emph{intensity} of preferences is represented by ratio scale utility, i.e. unique under multiplication by a positive constant. They characterize a multiplicative social welfare function that also yields a ratio scale utility. \cite{kaneko1979nash}, in a setting similar to ours, examines the aggregation of vNM preferences. They assume there is a reference outcome that is universally the least-preferred outcome, and they characterize a multiplicative social welfare function with respect to the reference outcome.\footnote{\cite{sprumont2019nash} characterizes a similar social welfare function for which each individual's reference outcome is their worst outcome.} \cite{saaty2012possibility} examines the aggregation of pairwise comparisons between alternatives, and shows that a two-step aggregation procedure --- which i) aggregates sets of pairwise comparisons into a set of social pairwise comparisons using the geometric mean, and ii) derives a cardinal ranking from the social pairwise comparisons --- satisfies Arrow's conditions adapted to the setting. These papers are similar in that they aggregate ratio comparisons of \emph{alternatives}, whereas we consider the aggregation of comparisons between self-regarding utilities.

\section{Constrained Outcomes} \label{constrain}

So far we have assumed that the set of social outcomes is the full $n$-fold Cartesian product of individual outcomes. But in most settings, outcomes are constrained, and only a subset of this product set is feasible. For instance, in a school choice setting, it's not feasible to assign to any school more students than its capacity. Any mechanism with transfers must consider the feasibility of transfers. When society considers climate policy, outcomes are constrained by physical limitations.

In such settings, infeasible outcomes are not only impossible for society to choose, but may also be difficult for individuals to imagine and form preferences over \citep{fishburn1967conjoint}. Under what conditions do our results hold when individuals and society have preferences only over a subset of the Cartesian product?

Let us return to the setting where there is a fixed set of individuals $I$ of size $n$. Let $Y \subseteq X$ be the set of feasible outcomes. For each $i \in I$, denote $Y_i = \{x_i \in X_i : x \in Y\}$ as the set of feasible outcomes for individual $i$ and $Y_{-i} = \{x_{-i} \in X_{-i} :  x \in Y\}$ as the set of feasible profiles of outcomes for individuals other than $i$. Suppose that individual preferences $\succsim_i \subseteq \Delta(Y) \times \Delta(Y)$ are now over lotteries over \emph{feasible} outcomes.\footnote{We assume that all lotteries over feasible outcomes are feasible. There are settings where some lotteries over feasible outcomes are not feasible (particularly when the set of lotteries is constrained by physical limitations, as is the case in climate policy). But as long as individuals are able to imagine each ex-post outcome in $Y$, it seems reasonable to assume they can form preferences over lotteries over feasible outcomes.}

\cite{fishburn1967conjoint} shows that mutual independence is still necessary and sufficient for additivity of the utility index.\footnote{Note that $Y \subseteq Y_i \times Y_{-i}$, i.e. the set of feasible outcomes is not necessarily the Cartesian product of $Y_i$ and $Y_{-i}$, so proposition \ref{prop1} does not apply.} 

\begin{proposition} \label{prop3}
    \citep{fishburn1967conjoint} Suppose $\succsim_i$ is vNM-rational with utility index $u_i$. $\succsim_i$ is mutually independent of $Y_i$ and $Y_{-i}$ if and only if there exists subutility indices $u_{i,i}: Y_i \rightarrow \mathbb{R}$ and $u_{i,-i}: Y_{-i} \rightarrow \mathbb{R}$ such that $u_i(x) = u_{i,i}(x_i) + u_{i,-i}(x_{-i})$ for any $x \in Y$. 
\end{proposition}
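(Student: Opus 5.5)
For sufficiency, I assume $u_i(x) = u_{i,i}(x_i) + u_{i,-i}(x_{-i})$ on $Y$. Then for any $P \in \Delta(Y)$ the expected utility splits as $\int u_{i,i}\,dP_i + \int u_{i,-i}\,dP_{-i}$, which depends on $P$ only through its marginals $P_i$ and $P_{-i}$. Hence $P_i = Q_i$ and $P_{-i} = Q_{-i}$ imply $P \sim_i Q$, which is mutual independence.

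For necessity, Proposition \ref{prop1} does not apply because $Y$ need not be a product. Instead I work with the bipartite graph $G$ whose left vertices are $Y_i$, whose right vertices are $Y_{-i}$, and which has an edge $\{x_i, x_{-i}\}$ for each $x \in Y$. The goal is to find ``potentials'' $u_{i,i}$ on left vertices and $u_{i,-i}$ on right vertices such that every edge $x$ satisfies $u_{i,i}(x_i) + u_{i,-i}(x_{-i}) = u_i(x)$. I do this separately on each connected component. Components have disjoint vertex sets, and every $x \in Y$ lies in exactly one component, so the pieces combine into well-defined functions on $Y_i$ and $Y_{-i}$. Within a component I fix a root vertex, set its potential to $0$, and propagate along paths. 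Crossing edge $x$ from a known vertex value $v$ assigns the other endpoint the value $u_i(x) - v$.

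The main step is showing that this propagation is well defined. Equivalently, along every closed walk $e_1, e_2, \dots, e_{2m}$ in $G$ (its length is even because $G$ is bipartite), the alternating sum $\sum_{t} (-1)^t u_i(e_t)$ must vanish. I get this directly from mutual independence. Let $P$ be the lottery putting mass proportional to multiplicity on the odd-indexed edges, and let $Q$ do the same for the even-indexed edges; both are finitely supported lotteries on $Y$.

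Each visit of the closed walk to a vertex enters by one edge and leaves by the next. These two edges have opposite parity, and each is incident to that vertex on the same side of the bipartition. So every vertex of $Y_i$ receives equal total mass under $P$ and under $Q$, and likewise every vertex of $Y_{-i}$; that is, $P_i = Q_i$ and $P_{-i} = Q_{-i}$. Mutual independence then gives $P \sim_i Q$. Since $\succsim_i$ has the EU representation $u_i$, the expected utilities are equal, and because $P$ and $Q$ have the same normalizing constant $1/m$, this equality is exactly the vanishing alternating sum.

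Consequently two paths from the root to the same vertex assign the same value: concatenating one path with the reverse of the other gives a closed walk, whose alternating sum is zero. The potentials are therefore well defined, and by construction they satisfy the edge equations, which yields the additive representation. I expect this well-definedness step, in particular the bookkeeping showing that the marginals coincide for closed walks that revisit vertices, to be the only real obstacle. The rest is routine.
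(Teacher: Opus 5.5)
Your proof is correct. It takes a different route from the paper, which gives no argument of its own and simply imports the result from \cite{fishburn1967conjoint}.

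You give a self-contained, constructive proof instead. Additivity on $Y$ is reduced to the vanishing of alternating sums of $u_i$ around closed walks in the bipartite graph with vertex classes $Y_i$ and $Y_{-i}$. That vanishing follows from one application of mutual independence to the odd-edge and even-edge lotteries. Your marginal bookkeeping is right: each visit pairs an odd and an even edge, and there are no loops. Since only finite walks and finitely supported lotteries are used, you need no finiteness of $Y$, which matches the statement.

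Your construction also yields the uniqueness claim that the paper asserts without proof after Theorem \ref{theorem3}. Two outcomes are $C_i$-related exactly when their edges lie in the same connected component of your graph. So the components are the blocks of $\mathcal{B}_i$, and the potentials are pinned down up to one shift $(u_{i,i}+c,\,u_{i,-i}-c)$ per component, chosen independently across components.

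Your method does not cover the many-factor case. For the $n$-way decomposition $u_0=\sum_i u_{0,i}$ in the proof of Theorem \ref{theorem3}, the incidence structure is a hypergraph and the alternating-cycle argument does not transfer directly. There the paper relies on \cite{fishburn1967incompleteapplication}.

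One small tidy-up: define potentials via arbitrary walks from the root rather than paths. To check the edge equation on an edge whose endpoints were both reached by other routes, you append that edge to a path, which can produce a walk. Your closed-walk argument already covers this.
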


Uniqueness of the subutility indices is more involved in this setting. We will discuss uniqueness of subutility indices after presenting theorem \ref{theorem3}.

In this setting, say that social preferences $\succsim_0$ satisfies (individual) sovereignty if for any $P, Q \in \Delta(Y)$ where $P_{-i} = Q_{-i}$, $P \succsim_0 Q$ if and only if $P \succsim_i Q$.

\begin{theorem} \label{theorem3}
    Suppose that each $\succsim_i$ satisfies mutual independence of $Y_i$ and $Y_{-i}$, and that $Y$ is finite. $\succsim_0$ satisfies individual sovereignty if and only if $u_0$ is a positive linear combination of $\{u_{i,i}\}_{i \in I}$.
\end{theorem}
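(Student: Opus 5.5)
The proof reduces sovereignty to finite-dimensional linear algebra on $\mathbb{R}^Y$, which is available because $Y$ is finite. Expected utility then becomes a linear functional on signed measures on $Y$. Throughout, I view each $u_{i,i}$ as the function $x \mapsto u_{i,i}(x_i)$ on $Y$. I read the statement as asserting the existence of \emph{some} subutility indices from Proposition \ref{prop3} for which $u_0 = \sum_i \lambda_i u_{i,i} + c$ with $\lambda_i>0$, since uniqueness of the subutilities can fail here.

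\emph{Sufficiency.} Suppose $u_0=\sum_i\lambda_i u_{i,i}$ and let $P_{-i}=Q_{-i}$. Then $P_j=Q_j$ for every $j\neq i$. Hence $\E_P u_0-\E_Q u_0=\lambda_i(\E_{P_i}u_{i,i}-\E_{Q_i}u_{i,i})$. On the other side, $\E_P u_i-\E_Q u_i$ equals the same bracket, because the $u_{i,-i}$ terms cancel. So society and $i$ rank $P$ and $Q$ identically.

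\emph{Necessity, step 1: the subspaces $V_i$.} Let $V_i\subseteq\mathbb{R}^Y$ be the signed measures with total mass $0$ whose marginal on $Y_{-i}$ vanishes. Every $m\in V_i$ is a positive multiple of $P-Q$ for lotteries with $P_{-i}=Q_{-i}$. To see this, split $m=m^+-m^-$; these parts have equal mass and equal $Y_{-i}$-marginals, so normalize both. Sovereignty then says that the linear functionals $m\mapsto\langle u_0,m\rangle$ and $m\mapsto\langle u_i,m\rangle$ have the same weak sign on all of $V_i$. By the standard lemma that two linear functionals with identical sign patterns on a vector space are positive multiples of each other, $u_0|_{V_i}=\lambda_i u_i|_{V_i}$ for some $\lambda_i>0$. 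If $u_i|_{V_i}=0$ then $u_0|_{V_i}=0$ and any $\lambda_i>0$ works. Since $u_{i,-i}$ annihilates $V_i$, we get $u_0|_{V_i}=\lambda_i u_{i,i}|_{V_i}$.

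\emph{Step 2: the residual.} Set $w=u_0-\sum_j\lambda_j u_{j,j}$. For $j\ne i$, $u_{j,j}$ depends only on $x_j$, which is part of $x_{-i}$, so it annihilates $V_i$. Combined with step 1, $w$ annihilates every $V_i$. The annihilator of $V_i$ in $\mathbb{R}^Y$ is exactly the set of functions on $Y$ of the form $g(x_{-i})$. Hence $w$ factors through $x_{-i}$ for every $i$. Equivalently, consider the graph on $Y$ joining two feasible outcomes that differ in a single coordinate. Then $w$ is constant on each connected component of this graph.

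\emph{Step 3: absorbing $w$, the main obstacle.} Here the structure of $Y$ enters, and I expect this to be the hard part. When $Y$ is a full product the graph is connected, $w$ is a constant, and we are done. In general, the plan is to absorb the component-wise constants into the freedom in the subutility indices. The first thing to try is the following. Proposition \ref{prop3} leaves $u_{i,i}$ and $u_{i,-i}$ determined only up to shifts that keep $u_{i,i}(x_i)+u_{i,-i}(x_{-i})$ fixed on $Y$. I would first characterize these shifts precisely: they should be component-indexed constants $c(x_i)=-d(x_{-i})$ that are constant along the same "one-coordinate-change" components. I would then choose one individual, say $i=1$, and replace $u_{1,1}$ by $u_{1,1}+w/\lambda_1$ and $u_{1,-1}$ by $u_{1,-1}-w/\lambda_1$. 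This is legitimate provided $w$ can be written both as a function of $x_1$ and as a function of $x_{-1}$ on $Y$. The second property holds by step 2. The first needs an argument that constancy on components lets $w$ be written as a function of $x_1$ alone. If two components share an $x_1$ value without being joined, this may fail. In that case I would instead distribute $w$ across several individuals, or work component by component. The fallback is to show that shifts respecting the components are exactly the residual non-uniqueness of the decomposition. That would give $u_0=\sum_i\lambda_i u_{i,i}'$ for suitably re-chosen subutility indices.
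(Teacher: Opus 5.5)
\textbf{Verdict: genuine gap, and it cannot be closed.} Your sufficiency direction (combination implies sovereignty) and Steps 1--2 are correct. Step 1 is also tidier than the paper's route: one sign-pattern lemma on $V_i$ gives both the paper's within-block affine relation between $u_{0,i}$ and $u_{i,i}$ and its separate claim that the slope $b_B$ is common to all blocks of $\mathcal{B}_i$.

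The gap is Step 3. Let $G$ be your graph joining outcomes that differ in exactly one coordinate. Whatever admissible subutilities you re-choose, $\sum_i\lambda_i u'_{i,i}(x_i)$ is additive on $Y$, so the conclusion forces $w$ to be additive. Steps 1--2 only make $w$ constant on components of $G$, and once $n\ge 3$ that can be strictly weaker.

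Concretely, take the paper's own house-allocation example: $n=3$ and $Y$ the six bijections onto $\{a,b,c\}$. Since $x_{-i}$ determines $x$ on $Y$, $P_{-i}=Q_{-i}$ forces $P=Q$. So mutual independence holds for any $u_1,u_2,u_3$, and sovereignty holds for every $\succsim_0$, as the paper itself remarks. Let $u_0$ equal $1$ on the even permutations $(a,b,c),(b,c,a),(c,a,b)$ and $0$ on the three odd ones. Let $P$ and $Q$ be uniform on the even and odd permutations respectively. Every one-dimensional marginal of $P$ and $Q$ is uniform, so every additive function has equal expectation under them. Yet $\mathbb{E}_P u_0=1>0=\mathbb{E}_Q u_0$. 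Hence $u_0$ is not a positive linear combination of $\{u_{i,i}\}$ for any admissible choice, and the ``only if'' direction of the statement is false.

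Your fallback fails for the same reason. The freedom in $i$'s decomposition is a shift constant on $C_i$-blocks, hence a function of $x_i$. So anything you absorb this way is additive.

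The paper's proof breaks at exactly this point. It asserts that $P_i=Q_i$ for all $i$ gives $P\sim_i Q$ and then $P\sim_0 Q$, and then invokes Fishburn (1967) for $u_0=\sum_i u_{0,i}(x_i)$. But individual mutual independence and sovereignty both require equal joint marginals $P_{-i}=Q_{-i}$, which the parity pair above violates.

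What your argument does establish is the correct statement for finite $Y$. Sovereignty holds iff $u_0=\sum_i\lambda_i u_{i,i}+w$ with $\lambda_i>0$ and $w$ constant on components of $G$. The converse follows from your sufficiency computation, since such a $w$ is a function of $x_{-i}$ for every $i$.

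When such a $w$ is additive, say $w=\sum_j h_j(x_j)$, your Step 3 does go through. Invariance of $w$ under changing $x_j$ alone forces $h_j$ to be constant on each $C_j$-block. So $h_j/\lambda_j$ is an admissible shift of $j$'s decomposition, which is exactly the paper's absorption of the intercepts $a_B$. This covers:
\begin{itemize}
\item the full product;
\item any $Y$ for which $G$ is connected;
\item $n=2$, where the components of $G$ are the $C_1$-blocks.
\end{itemize}
It does not cover finite $Y$ in general.
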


We relegate the discussion of uniqueness of the social utility index to section \ref{socialunique} in the appendix.

While theorem \ref{theorem3} looks similar to theorem \ref{theorem1}, the interpretation of the result could differ, depending on the structure of $Y$. Informally, if $Y$ is sparse, the self-regarding utilities are highly non-unique --- different choices of the self-regarding utilities can rank $i$'s individual outcomes in different orders. In that case, society can flexibly choose self-regarding utilities in its representation. The sparsity of $Y$ weakens the scope of the individual sovereignty axiom.

For each individual $i$, let binary relation $C_i \subseteq Y \times Y$ be defined: for any $x, y \in Y$, $xC_iy$ if there is a finite sequence $x = x^1, x^2, \dots, y$ of outcomes in $Y$ for which, between any pair of adjacent outcomes $x^k, x^{k+1}$, either $x^k_i = x^{k+1}_i$, or $x^k_{-i} = x^{k+1}_{-i}$. Say that any such sequence \emph{$C_i$-connects} $x$ and $y$. Also say that any such sequence $C_i$-connects $x_i$ and $y_i$ (or $x_{-i}$ and $y_{-i}$) if that sequence connects $x$ and $y$.

Note that $C_i$ is reflexive, symmetric, and transitive, and so is an equivalence relation. Let $\mathcal{B}_i$ be a partition of $Y$ where each block $B \in \mathcal{B}_i$ is an equivalence class with respect to $C_i$. That is, each $x \in Y$ is a member of exactly one block $B \in \mathcal{B}_i$, and for each block $B \in \mathcal{B}_i$, $x, y \in B$ if and only if $x C_i y$. For a given $u_i$ and a given block $B \in \mathcal{B}_i$, subutility indices are unique up to simultaneous transformations $u_{i,i}'(x_i) = u_{i,i}(x_i) + c$ and $u_{i,-i}'(x_{-i}) = u_{i,-i}(x_{-i}) - c$ for any $c \in \mathbb{R}$.

Notably, these transformations can be made in each block independently of other blocks. Then, for outcomes $x \in B$ and $y \in B'$ in different blocks, different subutility indices $u_{i,i}$ and $u_{i,i}'$ might disagree on the ranking of $x_i$ and $y_i$, i.e. $u_{i,i}(x_i) > u_{i,i}(y_i)$ and $u_{i,i}'(x_i) < u_{i,i}'(y_i)$. Mutually independent $\succsim_i$ does induce self-regarding preferences over individual outcomes \emph{within} each block $B$, but remains silent regarding outcomes \emph{across} blocks. The same is true for $i$'s other-regarding preferences. Where $\succsim_i$ remains silent, society can have a say --- the particular self-regarding utility $u_{i,i}$ in society's utility representation can arrange the blocks in $\mathcal{B}_i$ in any order, including making them overlap.

The following examples illustrate the aforementioned uniqueness properties:

\begin{example}
Let $I = \{1,2\}$ and let $Y$ be as listed in the table below. Suppose $\succsim_1$ is mutually independent of $Y_1$ and $Y_2$. Note that a given utility index $u_1$ can be decomposed into $u_{1,1}$ and $u_{1,2}$, $u_{1,1}'$ and $u_{1,2}'$, or $u_{1,1}''$ and $u_{1,2}''$.

    \begin{tabular}{L||L||L|L||L|L||L|L}
        x = (x_1, x_2) & u_1(x) & u_{1,1}(x_1) & u_{1,2}(x_2) & u_{1,1}'(x_1) & u_{1,2}'(x_2) & u_{1,1}''(x_1) & u_{1,2}''(x_2) \\
        \hline \hline
        (a, f) & 6 & 5 & 1 & 5 & 1 & 7 & -1 \\
        (b, f) & 4 & 3 & 1 & 3 & 1 & 5 & -1 \\
        (b, g) & 3 & 3 & 0 & 3 & 0 & 5 & -2 \\
        (c, g) & 2 & 2 & 0 & 2 & 0 & 4 & -2 \\
        \hline
        (d, h) & 1 & 1 & 0 & 6 & -5 & 1 & 0 \\
        (e, h) & 0 & 0 & 0 & 5 & -5 & 0 & 0
    \end{tabular}
\end{example}
    
Here, $\mathcal{B}_1  = \{\{(a,f) , (b,f), (b, g), (c,g)\}, \{(e, h),(d, h)\}\}$. $u_{1,1}$ assigns higher (sub)utility to 1's outcomes in the first block ($a$, $b$, and $c$) than those in the second block ($d$ and $e$), while $u_{1,1}'$ does the reverse. Similarly $u_{1,2}$ assigns higher utility to $f, g$ than to $h$, while $u_{1,2}''$ does the reverse.

Since all of these decompositions of $u_1$ are valid representations, $\succsim_1$ does not fix a self-regarding preference over $Y_1$ or an other-regarding preference over $Y_2$. Notably, society's preferences over 1's outcomes can place either block over the other (or overlap).

At the extreme, there are constrained outcome spaces for which society's preferences over an individual $i$'s outcome have \emph{no relation} to $i$'s preferences. Say that $Y$ is \emph{$i$-disconnected} if for any $B \in \mathcal{B}_i$, any $x, y \in B$, $x_i = y_i$, i.e. each block contains only one individual outcome for $i$. $Y$ is \emph{disconnected} if $Y$ is $i$-disconnected for all $i \in I$. The following example in house allocation illustrates a disconnected $Y$:

\begin{example}
    House allocation. Let $n = 3$, $Y_i = \{a,b,c\}$ for all $i$. Outcomes $x \in Y$ are such that $x_i \neq x_j$ for distinct $i,j \in I$, as listed below, sorted according to $\mathcal{B}_1$. A given utility index $u_1$ can be decomposed into $u_{1,1}$ and $u_{1,-1}$, or $u_{1,1}'$ and $u_{1,-1}'$. The table displays utility values.

\begin{center}
    \begin{tabular}{L||L||L|L||L|L}
        x = (x_1, x_2, x_3) & u_{1}(x) & u_{1,1}(x_i) & u_{1, -1}(x_{-1}) & u_{1,1}'(x_1) & u_{1, -1}'(x_{-1}) \\
        \hline \hline
        (a,b,c) & 2 & 2 & 0 & 0 & 2 \\
        (a,c,b) & 2 & 2 & 0 & 0 & 2 \\
        \hline
        (b,a,c) & 1 & 1 & 0 & 1 & 0 \\
        (b,c,a) & 1 & 1 & 0 & 1 & 0 \\
        \hline
        (c,a,b) & 0 & 0 & 0 & 2 & -2 \\
        (c,b,a) & 0 & 0 & 0 & 2 & -2
    \end{tabular}
\end{center}

\end{example}
    The outcome space in the example (and generally house allocation with $n$ individuals and $n$ houses) is disconnected --- for each individual $i$, each individual outcome lives in a different block in $\mathcal{B}_i$. For individual 1 to receive house $b$ instead of $a$, house $a$ must go to another individual. 1's preferences do not determine whether the change in utility from 2 to 1 is due to 1's own change in outcome or the change in outcome for individual 2 or 3. Thus, self-regarding utilities in society's representation can be chosen arbitrarily, and individual sovereignty places no restrictions on social preferences.

    What's required of an outcome space so that for each individual $i$, $\succsim_i$ does induce complete self-regarding preferences? $Y$ is \emph{$i$-connected} if for any $x, y \in Y$, $xC_i y$, or equivalently, $\mathcal{B}_i$ is a single set. $Y$ is \emph{connected} if $Y$ is $i$-connected for all $i \in I$. If $Y$ is connected, the interpretation of theorem \ref{theorem3} is the same as that of theorem \ref{theorem1}. 

    We informally discuss examples of other design settings to illustrate the definitions of connected and disconnected. While house allocation with $n$ individuals and $n$ houses is disconnected, allocation with $n+1$ houses is connected. For any individual $i$ and any two houses $x_i$ and $y_i$, there exists $x_{-i}$ for which $(x_i, x_{-i}), (y_i, x_{-i}) \in Y$, namely any allocation of the remaining $n-1$ houses to the rest of the individuals. In practice, house $n+1$ could denote not being assigned a house. By similar reasoning, even with $n$ individuals and $n$ houses, if the designer is not required to allocate every house, then the outcome space is connected. A school choice setting where a student's outcome is the school they attend is similar to house allocation. A school choice setting with $n$ individuals and $n$ seats is disconnected, and that with more seats is connected. One-sided matching, where an individual's outcome is their partner, is disconnected --- matching $i$ to $j$ affects both $i$ and $j$'s outcome, and $i$'s changes in utility may be due to their own change in outcome or $j$'s. For the same reason, two-sided, one-to-one matching is disconnected if individuals on each side have other-regarding preferences towards individuals on the other side. If other-regarding preferences are restricted to concern only individuals on the same side, then outcomes are connected so long as individuals can remain unmatched.

\section{Subjective Uncertainty} \label{subjective}

We now turn to a Savage setting, where individuals have heterogeneous beliefs. Let $S$ be the set of states of the world, with $\sigma$-algebra $\Sigma$ of events. We present our results assuming that the set of social outcomes is the full $n$-fold Cartesian product $X = X_1 \times \dots \times X_n$. Analogous results hold when outcomes are constrained as in the previous section. Additionally assume that $X$ is endowed with a $\sigma$-algebra. The set $F = \{f \mid f: S \rightarrow X, \text{$f$ is $\Sigma$-measurable}\}$ is the set of \emph{acts}.

In this setting, each individual $i$ has preferences $\succsim_i \subset F \times F$ over acts. Similarly to the objective setting, we assume that individual preferences admit a subjective expected utility representation --- say that $\succsim_i$ is \emph{SEU-rational} if there exists a nonconstant, bounded utility index $u_i: X \rightarrow \mathbb{R}$ and a countably additive and nonatomic probability measure $\pi_i$ on $\Sigma$, such that for any act $f \in F$, $\int_S u_i(f(s))d\pi_i(s)$ represents $\succsim_i$.\footnote{\cite{villegas1964qualitative} provides conditions on preferences guaranteeing representation by a subjective countably additive probability measure and a bounded utility function.}

We also assume an independence condition on individuals' preferences. To state the condition, we introduce the following notation. For any individual $i \in I$ and act $f \in F$, let the \emph{induced} probability distribution of act $f$ according to individual $i$, $P^{f, i}$, be defined as
\begin{align*}
    P^{f, i}(Y)  = \pi_i\left(\{s \in S: f(s) \in Y\}\right)
\end{align*} 
for any measurable $Y \subseteq X$. That is, individual $i$ believes that act $f$ would yield a probability distribution $P^{f, i}$ over outcomes. Denote by $P^{f, i}_j$ the marginal distribution over $X_{j}$ and $P^{f, i}_{-j}$ the marginal distribution over $X_{-j}$.

\begin{definition}
    $\succsim_i$ over acts satisfies \emph{mutual independence of $X_i$ and $X_{-i}$} if for any acts $f,g \in F$ such that the induced distribution $P^{f,i}_i = P^{g,i}_i$ and $P^{f, i}_{-i} = P^{g,i}_{-i}$, $f \sim_i g$.
\end{definition}

In this subjective uncertainty setting, mutual independence is also equivalent to an additive utility index.

\begin{proposition} \label{prop2}
    $\succsim_i$ satisfies mutual independence of $X_i$ and $X_{-i}$ if and only if there exists \emph{subutility indices }$u_{i,i}: X_i \rightarrow \mathbb{R}$ and $u_{i,-i}: X_{-i} \rightarrow \mathbb{R}$ such that $u_i(x) = u_{i,i}(x_i) + u_{i,-i}(x_{-i})$ for any $x \in X$. Furthermore, for a given $u_i$, subutility indices $u_{i,i}$ and $u_{i,-i}$ are unique up to simultaneous transformations $u_{i,i}' = u_{i,i} + c$ and $u_{i,-i}' = u_{i,-i} - c$ for any $c \in \mathbb{R}$.
\end{proposition}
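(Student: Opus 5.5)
The plan is to reduce Proposition \ref{prop2} to the objective-lottery case, Proposition \ref{prop1}, by using the fact that individual $i$'s preferences over acts depend only on the distributions that acts induce under $\pi_i$. Given the SEU representation, $f \succsim_i g$ iff $\int_X u_i\, dP^{f,i} \ge \int_X u_i\, dP^{g,i}$. So $\succsim_i$ over acts induces a vNM-type preference over the set of induced distributions $\{P^{f,i}: f\in F\}$, represented by the expected utility of $u_i$.

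The key step is a richness claim: since $\pi_i$ is nonatomic and countably additive, every finitely supported lottery $P\in\Delta(X)$ is induced by some act. To see this, partition $S$ into finitely many measurable events with prescribed probabilities $P(x)$, which is possible by Lyapunov/Sierpiński's theorem for nonatomic measures. Then let $f$ be constant at $x$ on the corresponding event; $f$ is measurable as a simple function. Consequently, the restriction of $\succsim_i$ to acts inducing simple lotteries is exactly the vNM preference on $\Delta(X)$ with index $u_i$. Moreover, mutual independence of acts, restricted to such acts, is precisely mutual independence of $X_i$ and $X_{-i}$ for that lottery preference.

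For the ``only if'' direction, apply Proposition \ref{prop1} to this lottery preference. This yields $u_{i,i}$ and $u_{i,-i}$ with $u_{i,i}(x_i)+u_{i,-i}(x_{-i})$ representing it. By vNM uniqueness this sum equals $a u_i + b$ with $a>0$, and after rescaling both subutilities by $1/a$ and absorbing the constant into $u_{i,-i}$ we obtain $u_i = u_{i,i}+u_{i,-i}$ exactly. Alternatively, one can argue directly with the $50$--$50$ lotteries $\tfrac12(x_i,x_{-i})+\tfrac12(y_i,y_{-i})$ and $\tfrac12(x_i,y_{-i})+\tfrac12(y_i,x_{-i})$. These share marginals, so $u_i(x)+u_i(y)=u_i(x_i,y_{-i})+u_i(y_i,x_{-i})$. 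Fixing a reference $(z_i,z_{-i})$, set $u_{i,i}(x_i)=u_i(x_i,z_{-i})$ and $u_{i,-i}(x_{-i})=u_i(z_i,x_{-i})-u_i(z)$. One should check boundedness and measurability of these sections; both follow from the corresponding properties of $u_i$, since sections of measurable functions are measurable.

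For the ``if'' direction, suppose $u_i$ is additive. Then for any act $f$ we have $\int u_i\,dP^{f,i}=\int u_{i,i}\,dP^{f,i}_i+\int u_{i,-i}\,dP^{f,i}_{-i}$, which depends only on the two marginals, giving indifference. The one subtlety is that induced distributions of general measurable acts need not be simple. However, the integral splits by linearity for any bounded measurable additive $u_i$, so no restriction to simple acts is needed here.

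Uniqueness follows since if $u_{i,i}+u_{i,-i}=u_{i,i}'+u_{i,-i}'$ on $X_i\times X_{-i}$, then $u_{i,i}-u_{i,i}'$ is a function of $x_i$ equal to a function of $x_{-i}$, hence constant. The main obstacle is the richness step: constructing acts that realize arbitrary simple lotteries requires the nonatomic partition result. I also expect some care to be needed to ensure the sections defining the subutilities are measurable.
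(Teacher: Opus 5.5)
Your proposal is correct and follows essentially the paper's route. The paper proves the nontrivial direction by exactly your ``alternative'' argument: it uses nonatomicity of $\pi_i$ to realize the two $50$--$50$ lotteries as acts, then invokes mutual independence and the SEU representation to get $u_i(x_i,x_{-i}) = u_i(x_i,x^*_{-i}) + u_i(x^*_i,x_{-i}) - u_i(x^*_i,x^*_{-i})$, with the subutilities defined as sections at a reference point. Your reduction to Proposition \ref{prop1} by realizing every simple lottery is a packaged version of the same idea, and your arguments for the converse (valid even for non-simple induced distributions) and for uniqueness are correct and supply the parts the paper omits.
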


Let social preferences $\succsim_0 \subseteq F \times F$ be over acts. As in \cite{gilboa2004utilitarian}, we assume that $\succsim_0$ is SEU-rational.

What definition of individual sovereignty is appropriate in this subjective setting? Here's a candidate definition: $\succsim_0$ satisfies \emph{unrestricted (individual) sovereignty} if for any $f, g \in F$ where $P^{f,i}_{-i} = P^{g,i}_{-i}$ for some $i$, $f \succsim_0 g$ if and only if $f \succsim_i g$.

But in this setting with subjective uncertainty, there may be multiple individuals $i$ and $j$ for whom $P^{f,i}_{-i} = P^{g,i}_{-i}$ and $P^{f,j}_{-j} = P^{g,j}_{-j}$, and their preferences over $f$ and $g$ may disagree. The following example illustrates this.

\begin{example}
    Let $I = \{1,2\}$ and $X_i = \{x_i, y_i\}$ for each $i =1,2$. Suppose that for each $i = 1,2$, $\succsim_i$ is SEU-rational, satisfies mutual independence of $X_i$ and $X_{-i}$, and $u_{i,i}(x_i) > u_{i,i}(y_i)$. Consider a partition $E_1, E_2, E_3$ of $S$, for which $\pi_1(E_1) = 1/2$, $\pi_1(E_2) = \pi_1(E_3) = 1/4$, $\pi_2(E_1) = \pi_2(E_2) = 1/4$, $\pi_2(E_3) = 1/2$. Consider acts $f, g \in F$, where $f(s) = (x_1, x_2)$ for $s \in E_1$, $f(s) = (x_1, y_2)$ for $s \in E_2$, $f(s) = (y_1, y_2)$ for $s \in E_3$, $g(s) = (y_1, y_2)$ for $s \in E_1$, $g(s) = (y_1, x_2)$ for $s \in E_2$, $g(s) = (x_1, x_2)$ for $s \in E_3$. Note that $P^{f,1}_2 = P^{g,1}_2$ and $P^{f,2}_1 = P^{g,2}_1$. But $f \succ_1 g$ and $f \prec_2 g$, and so social preferences cannot respect unrestricted sovereignty.
\end{example}

Unrestricted sovereignty goes awry because society defers not only to individuals' utility, but also to their beliefs about when society should defer. The example illustrates that multiple individuals can each believe that their outcome is the only one that differs between two acts, and thus that they each should dictate social preferences, potentially leading to a contradiction. For those who are sympathetic to unrestricted sovereignty as a normative desideratum, the example also demonstrates an impossibility result.

In the spirit of \cite{gilboa2004utilitarian}, we restrict sovereignty to hold only between pairs of acts for which all individuals agree on the induced probability distributions over outcomes.

First, define $\Lambda$ to be the set of events such that for each event $E \in \Lambda$, every individual is in agreement about the probability of $E$:
\begin{align*}
    \Lambda = \{E \in \Sigma: \text{for all $i,j \in I$, }\pi_i(E) = \pi_j(E)\}.
\end{align*}
   
Call act $f$ a \emph{lottery} if for every measurable $Y \subseteq X$, $f^{-1}(Y) \in \Lambda$. Denote by $\hat{F}$ the set of lotteries. Note that for any lottery $f $, $P^{f,i}= P^{f, j}$ for any $i,j \in I$, so we abbreviate the notation to $P^f$.
\begin{definition}
    $\succsim_0$ satisfies \emph{restricted (individual) sovereignty} if for any lotteries $f, g \in \hat{F}$ where $P^{f}_{-i} = P^{g}_{-i}$, $f \succsim_0 g$ if and only if $f \succsim_i g$.
\end{definition}

Since restricted sovereignty only applies to acts for which all beliefs are in agreement, society doesn't defer to potentially conflicting beliefs.

Before stating our main result in this setting, we introduce a mild condition on the preference profile.
\begin{definition}
    \emph{Minimal self-interest}. There exists an individual $i$ for whom there exist $x_i, y_i \in X_i$ and $x_{-i} \in X_{-i}$ such that $f \succ_i g$, where $f(s) = (x_i, x_{-i})$ and $g(s) = (y_i, x_{-i})$ for all $s \in S$. 
\end{definition}

That is, there exists an individual $i$ who has a strict preference between two specific constant acts which yield different outcomes for $i$ but the same outcomes for other individuals.

We can now state our main result in this setting.

\begin{theorem} \label{theorem2}
    Suppose that each $\succsim_i$ satisfies mutual independence of $X_i$ and $X_{-i}$ and that minimal self-interest holds. $\succsim_0$ satisfies restricted individual sovereignty if and only if $u_0$ is a positive linear combination of $\{u_{i,i}\}_{i \in I}$ and $\pi_0$ is an affine combination of $\{\pi_i\}_{i \in I}$.
\end{theorem}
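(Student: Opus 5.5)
We reduce Theorem \ref{theorem2} to the objective setting of Theorem \ref{theorem1}, working on the set $\hat F$ of lotteries, and then pin down $\pi_0$ with a Gilboa--Samet--Schmeidler type argument.

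\emph{Step 1: lotteries behave like objective lotteries.} Since each $\pi_i$ is nonatomic and countably additive, we will build a rich enough subclass of $\Lambda$. By Lyapunov's convexity theorem applied to the vector measure $(\pi_1,\dots,\pi_n)$, for every $p\in[0,1]$ there is an event $E$ with $\pi_i(E)=p$ for all $i$. Iterating inside events (again by Lyapunov restricted to $E$), we obtain for any finite-support $P\in\Delta(X)$ a lottery $f\in\hat F$ with $P^f=P$. On $\hat F$ each $\succsim_i$ is the vNM preference $P^f\mapsto\int u_i\,dP^f$. Mutual independence restricted to lotteries, via Proposition \ref{prop2}, gives $u_i=u_{i,i}+u_{i,-i}$.

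\emph{Step 2: the utility part.} On $\hat F$, $\succsim_0$ ranks $f$ by $\int_S u_0(f)\,d\pi_0$. This is not automatically a function of $P^f$, because $\pi_0$ may disagree with the common value on events in $\Lambda$. We will first show that $\succsim_0$ restricted to $\hat F$ depends only on $P^f$. Take $f,g\in\hat F$ with $P^f=P^g$. Then $P^f_{-i}=P^g_{-i}$ for any $i$, so restricted sovereignty gives $f\sim_0 g$ because $f\sim_i g$. Hence $\succsim_0$ induces a vNM-type ordering on $\Delta(X)$. Restricted sovereignty then gives exactly the sovereignty hypothesis of Theorem \ref{theorem1} for this induced ordering. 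The main point to verify is that this induced ordering is genuinely vNM. Using constant acts, it is represented by $u_0$ on degenerate lotteries. To identify mixtures, we will show that $\pi_0$ agrees with the common value $\pi_i(E)$ on every $E\in\Lambda$. We will do this by comparing the two-outcome lotteries $x_E y$ and $x_{E'}y$, where $\pi_i(E)=\pi_i(E')$ and $x,y$ are the outcomes supplied by minimal self-interest, which differ only in coordinate $i^*$. These lotteries have equal $P_{-i^*}$ and equal induced distributions, so $\succsim_0$ is indifferent between them. Since $u_0(x)\neq u_0(y)$ (sovereignty applied to constant acts gives $x\succ_0 y$), this forces $\pi_0(E)=\pi_0(E')$. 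It follows that $\pi_0$ restricted to $\Lambda$ is a function $\phi$ of the common probability. Finite additivity on disjoint $\Lambda$-events built through Lyapunov then makes $\phi$ additive, hence $\phi$ is the identity. Therefore $\succsim_0$ on $\hat F$ is the vNM preference with index $u_0$, and Theorem \ref{theorem1} yields $u_0=\sum_i\lambda_i u_{i,i}$ with $\lambda_i>0$ (after normalization). We expect this identification of $\pi_0$ on $\Lambda$, and the resulting validity of the mixture structure, to be the main obstacle.

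\emph{Step 3: beliefs.} We now have $\pi_0(E)=\pi_i(E)$ for all $i$ whenever $E\in\Lambda$; that is, $\pi_0(E)=0$ whenever $\pi_i(E)=\pi_1(E)$ for all $i$ and $E$ is replaced by $E$ minus a matched event. More cleanly, for every $E$ in the joint kernel of the measures $\pi_i-\pi_1$, we get $\pi_0(E)=\pi_1(E)$. We will then apply the linear-algebra lemma used by \cite{gilboa2004utilitarian}, which is a consequence of Lyapunov and a separation argument. If, for signed measures $\nu_0=\pi_0-\pi_1$ and $\nu_i=\pi_i-\pi_1$, the measure $\nu_0$ vanishes on every event on which all $\nu_i$ vanish, then $\nu_0\in\mathrm{span}\{\nu_i\}$. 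This gives $\pi_0=\sum_i\theta_i\pi_i$ with $\sum_i\theta_i=1$, i.e.\ an affine combination. The proof of this lemma goes through the range of $(\nu_0,\nu_1,\dots,\nu_n)$, which is convex and compact, and argues via a hyperplane that an independent $\nu_0$ would yield an event violating the kernel condition.

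\emph{Step 4: converse.} Suppose $u_0=\sum\lambda_i u_{i,i}$ and $\pi_0$ is affine in the $\pi_i$. Then $\pi_0$ equals the common probability on $\Lambda$, so for lotteries $f,g$ we have $\mathbb E_{\pi_0}u_0(f)=\sum_j\lambda_j\int u_{j,j}\,dP^f_j$. When $P^f_{-i}=P^g_{-i}$, all terms with $j\ne i$ cancel and $u_{i,-i}$ cancels, so the comparison reduces to $\lambda_i$ times $i$'s comparison. This gives restricted sovereignty.
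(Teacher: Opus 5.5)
Your proof is correct, but it is organized differently from the paper's.

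The paper treats beliefs first. Its Claim 1 uses minimal self-interest and restricted sovereignty only for a consensus event of probability $\frac{1}{2}$ compared with its complement. Convexity of the range of $(\pi_0,\dots,\pi_n)$ and complement symmetry (Claims 2--3) then lift that single fact directly to $\pi_0=\sum_i\theta_i\pi_i$. Only afterwards does the paper redo the argument of Theorem \ref{theorem1} on Dubins--Spanier lotteries, tacitly using that $\pi_0$ now coincides with the consensus probability on $\Lambda$.

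You instead isolate that consensus fact first and prove it elementarily:
\begin{itemize}
\item $\succsim_0$ on $\hat F$ depends only on $P^f$;
\item comparing $x_E y$ with $x_{E'} y$ extends the paper's Claim 1 (your case $E'=E^c$, $p=\frac12$) to any two $\Lambda$-events with equal consensus probability;
\item the Cauchy equation then gives $\pi_0=$ consensus on $\Lambda$. Here $\phi\ge 0$ supplies the regularity, and you should say so explicitly.
\end{itemize}
That fact then does double duty. It turns the utility part into a clean black-box application of Theorem \ref{theorem1}, and it is exactly the kernel hypothesis of your span lemma in Step 3.

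Your route is more transparent. Its utility half needs only partitions with prescribed common probabilities for $\pi_1,\dots,\pi_n$, not convexity of the range of a vector measure that includes $\pi_0$. The paper's route keeps the belief argument self-contained, whereas your Step 3 rests on a cited lemma that needs the same Lyapunov machinery. That lemma does hold here, and more simply than via a separating hyperplane. Since $\nu_i(S)=0$ for all $i$, the range of $(\nu_0,\dots,\nu_n)$ is convex and symmetric about the origin, so it contains a relative neighborhood of $0$ in its linear span. The kernel condition therefore makes the first coordinate a linear function of the others on that span.

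Finally, drop the garbled sentence in Step 3 beginning ``that is, $\pi_0(E)=0$ whenever''; your ``more cleanly'' formulation is the correct one.
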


Say that such $\succsim_0$ is \emph{subjective self-regarding utilitarian}, since, as before, society's utility is a weighted sum of each individual's self-regarding utility. As before, the role of other-regarding preferences is relegated to determining the weights on utility and beliefs.

While restricted sovereignty doesn't require society to defer to individual beliefs as unrestricted sovereignty does, in conjunction with the other conditions, it does require society to defer to individuals on the probability of events on which individuals unanimously agree. \cite{mongin1995consistent} shows that this is sufficient for society's beliefs to be an affine combination of individual beliefs. The additive form of utility follows similarly to theorem \ref{theorem1}. Here we use a result from \cite{dubins1961cut} to construct the desired lotteries in the proof.

\section{Conclusion} \label{conclusion}

In our setting, individuals have self-regarding and other-regarding preferences over a Cartesian product of sets of individual outcomes. We introduce a normative desideratum called individual sovereignty, and show that individual sovereignty implies that social preferences are self-regarding utilitarian, which relegates the role of other-regarding preferences to just determining the weights placed on each individual's self-regarding utility. One particular weighting method --- geometric self-regarding utilitarianism --- hinges on the consistency of sequential aggregation and the homogeneity of scaling altruistic comparisons. We provide extensions in a setting where feasible social outcomes are a subset of the Cartesian product, and in a setting with subjective uncertainty.

\small 
\setlength{\bibsep}{0pt}
\bibliography{draft/bibliography}

\appendix

\section{Proofs} \label{proofs}

\subsection{Proof of Theorem \ref{theorem1}}

\emph{Sufficiency of sovereignty.}
\begin{proof}

Consider lotteries $P, Q \in \Delta(X)$ such that $P_1 = Q_1$ and $P_{-1} = Q_{-1}$. By mutual independence, $P \sim_1 Q$. By sovereignty, $P \sim_0 Q$. Thus, $\succsim_0$ is mutually independent of $X_1$ and $X_{-1}$. Then, by proposition \ref{prop1}, there exists $u_{0,1}$ and $u_{0,-1}$ such that $u_0 = u_{0,1} + u_{0,-1}$.

Pick any $(x_1^*, \dots,  x_n^*) \in X$. Let $u_{0,2}(x_2^*)$ and $u_{0,-\{1,2\}}(x_{-\{1,2\}}^*)$ be any two numbers that satisfy
    \begin{align*}
       u_{0,2}(x_2^*) + u_{0,-\{1,2\}}(x_{-\{1,2\}}^*) = u_{0,-1}(x_{-1}^*).
    \end{align*}

    Define $u_{0,2}$ and $u_{0,-\{1,2\}}$
    \begin{align*}
        u_{0,2}(x_{2}) = u_{0,-1}(x_2, x_{-\{1,2\}}^*) - u_{0,-\{1,2\}}(x_{-\{1,2\}}^*) \\
        u_{0,-\{1,2\}}(x_{-\{1,2\}}) = u_{0,-1}(x_2^*, x_{-\{1,2\}}) - u_{0,2}(x_{2}^*) 
    \end{align*}
    for any $x_2 \in X_2$ and $x_{-\{1,2\}} \in X_{-\{1,2\}}$. Adding the two and substituting, 
    \begin{align*}
        u_{0,2}(x_{2}) + u_{0,-\{1,2\}}(x_{-\{1,2\}}) = u_{0,-1}(x_2, x_{-\{1,2\}}^*) + u_{0,-1}(x_2^*, x_{-\{1,2\}}) - u_{0,-1}(x_{-1}^*)
    \end{align*}

    Since $\succsim_2$ is mutual independence of $X_2$ and $X_{-2}$, by sovereignty,
    \begin{align*}
        \left(\frac{1}{2}(x_1, x_2, x_{-\{1,2\}}), \frac{1}{2}(x_1, x_2^*, x_{-\{1,2\}}^*)\right) \sim_0 \left(\frac{1}{2}(x_1, x_2^*, x_{-\{1,2\}}), \frac{1}{2}(x_1, x_2, x_{-\{1,2\}}^*)\right)      
    \end{align*}
    (for any $x_1 \in X_1$).

    Then, by rationality and mutual independence of $X_1$ and $X_{-1}$
    \begin{align*}
        \frac{1}{2}u_0(x_1, x_2, x_{-\{1,2\}}) + \frac{1}{2} u_0(x_1, x_2^*, x_{-\{1,2\}}^*) =
        \frac{1}{2}\left(u_{0,1}(x_1) + u_{0,-1}(x_2, x_{-\{1,2\}}) + u_{0,1}(x_1)+ u_{0.-1}(x_2^*, x_{-\{1,2\}}^*)\right) \\
        = \frac{1}{2}\left(u_{0,1}(x_1) + u_{0,-1}(x_2^*, x_{-\{1,2\}}) + u_{0,1}(x_1)+ u_{0.-1}(x_2, x_{-\{1,2\}}^*)\right) =
        \frac{1}{2}u_0(x_1, x_2^*, x_{-\{1,2\}}) + \frac{1}{2} u_0(x_1, x_2, x_{-\{1,2\}}^*)
    \end{align*} 

    And so $u_{0,-1}(x_2, x_{-\{1,2\}}^*) + u_{0,-1}(x_2^*, x_{-\{1,2\}}) - u_{0,-1}(x_{-1}^*) = u_{0,-1}(x_2, x_{-\{1,2\}})$, and $u_{0,2}(x_{2}) + u_{0,-\{1,2\}}(x_{-\{1,2\}}) = u_{0,-1}(x_2, x_{-\{1,2\}})$. Iterate for each $i \in I$. Then there exists functions $\{u_{0,i}\}_{i \in I}$ such that $u_0(x) = \sum_{i \in I} u_{0,i}(x_i)$.\footnote{\cite{fishburn1965independence} proposes a definition of mutual independence of all i, which in our setting is
\begin{definition}
    $\succsim_0$ satisfies mutual independence of $X_1, X_2, \dots, X_n$ if for any lotteries $P, Q \in \Delta(X)$ for which $P_i = Q_i$ for all $i \in I$, $P \sim_0 Q$.
\end{definition}

We build on \cite{fishburn1965independence} by showing that if $\succsim_0$ satisfies mutual independence of $X_i$ and $X_{-i}$ for all $i \in I$, then $\succsim_0$ satisfies mutual independence of $X_1, X_2, \dots, X_n$.
    }

    By proposition \ref{prop1}, if $\succsim_i$ is vNM-rational and satisfies mutual independence of $X_i$ and $X_{-i}$, we can define vNM-rational self-regarding preferences $\succsim_{i,i} \subseteq \Delta(X_i) \times \Delta(X_i)$ as represented by $\int_{X_{i}}u_{i,i}(x_i)dP_i(x_i)$ for $P_i \in \Delta(X_{i})$. 
    
    We want to show that $u_{0,i}$ is a positive affine transformation of $u_{i,i}$. Consider any $P, Q \in \Delta(X)$ for which $P_{-i} = Q_{-i}$. By sovereignty,
    \begin{align*}
        P & \succsim_0 Q \\
        \iff \int_X(u_{0,1}(x_1) + \dots + & u_{0,i}(x_i) + \dots + u_{0,n}(x_n))dP(x) \\
        \geq \int_X(u_{0,1}(x_1) + \dots + & u_{0,i}(x_i) + \dots + u_{0,n}(x_n))dQ(x)  \\
        \iff \int_{X_i} u_{0,i}(x_i) dP_i(x_i) & \geq \int_{X_i} u_{0,i}(x_i) dQ_i(x_i) \\
        \iff \int_{X_i} u_{i,i}(x_i) dP_i(x_i) & \geq \int_{X_i} u_{i,i}(x_i) dQ_i(x_i) \\
        \iff \int_X(u_{i,i}(x_i) + u_{i,-i}(x_{-i}))dP(x) & \geq \int_X(u_{i,i}(x_i) + u_{i,-i}(x_{-i}))dQ(x) \\
        \iff P &\succsim_i Q
    \end{align*}

    That is, $u_{0,i}$ is also a utility index for $\succsim_{i,i}$. By the uniqueness of utility indices up to positive affine transformations, $u_{0,i}$ is a positive affine transformation of $u_{i,i}$, i.e. there exists $w_i > 0$ and $c_i$ such that $u_{0,i}(x_i) = w_i u_{i,i}(x_i) + c_i$ for all $i$. Therefore, $u_0(x)= c + \sum_{i \in I} w_i u_{i,i}(x_i)$, where $c = \sum_{i \in I} w_i c_i$. 

\end{proof}

We omit the necessity of sovereignty.

\subsection{Proof of Theorem \ref{theoremgeom}}

\begin{proof}

    Consider any $(\succsim_I, J) \in \mathcal{D}$, individuals $j,k \in J$, and individual outcomes $x_j,y_j,x_k,y_k$ such that $u_{j,j}(x_j) > u_{j,j}(y_j)$ and $u_{k,k}(x_k) > u_{k,k}(y_k)$ for $\{u_{\ell,\ell}\}_{\ell \in J}$ consistent with $\succsim_I$. Fix $J$ throughout this proof unless otherwise noted.

    Consider another social choice problem $(\succsim_I', J)$ for which $\mu_{j, k}^i((x_j, y_j), (x_k, y_k)) = \mu_{j, k}^{i\prime}((x_j, y_j), (x_k, y_k))$ for all $i \in I$. Homogeneity implies $\mu_{j, k}^{\Phi(\succsim_I, J)}((x_j, y_j), (x_k, y_k)) = \mu_{j, k}^{\Phi(\succsim_I', J)}((x_j, y_j), (x_k, y_k))$. Thus, for any $(\succsim_I'', J)$, individual outcomes $x_j,y_j,x_k,y_k$ such that $u_{j,j}(x_j) > u_{j,j}(y_j)$ and $u_{k,k}(x_k) > u_{k,k}(y_k)$ for $\{u_{\ell,\ell}\}_{\ell \in J}$ consistent with $\succsim_I''$, $\mu_{j, k}^{\Phi(\succsim_I'', J)}((x_j, y_j), (x_k, y_k))$ is a function of $\left(\mu_{j, k}^{i\prime\prime}((x_j, y_j), (x_k, y_k))\right)_{i \in I}$. 
    
    Denote this function $\phi_{I, j,k}: \mathbb{R}_{>0}^{\vert I \vert} \rightarrow \mathbb{R}_{>0}$\footnote{$\phi_{I, j,k}$ is specific to outcomes $x_j, y_j, x_k, y_k$, but we drop this dependence for notational simplicity.}.
    
    Similarly, for another set of voters $I' \in \mathcal{N}$ for which $\vert I'\vert = \vert I \vert$ and $(\succsim_{I'}, J)$ such that $u_{j,j}(x_j) > u_{j,j}(y_j)$ and $u_{k,k}(x_k) > u_{k,k}(y_k)$ for $\{u_{\ell,\ell}\}_{\ell \in J}$ consistent with $\succsim_{I'}$, $\mu_{j, k}^{\Phi(\succsim_{I'}, J)}((x_j, y_j), (x_k, y_k))$ is a function of $\{\mu_{j, k}^i((x_j, y_j), (x_k, y_k))\}_{i \in I'}$. Denote this function $\phi_{I', j, k}$. By anonymity, $\phi_{I', j, k} = \phi_{I, j, k}$, i.e. these functions are identical across preference profiles of the same size.

    Let function $\phi_{j, k}: \bigcup_{n  \in \mathbb{N} }\mathbb{R}_{>0}^n \rightarrow \mathbb{R}$ be defined piecewise across the size of the voters, where for any $(\succsim_I, J) \in \mathcal{D}$, $\phi_{j,k} ((\mu_{j, k}^i((x_j, y_j), (x_k, y_k)))_{i \in I}) = \mu_{j,k}^{\Phi(\succsim_I, J)}((x_j, y_j), (x_k, y_k))$. Since $\Phi$ is anonymous, $\phi_{j,k}$ is symmetric, and so we will sometimes express $\phi_{j,k}$ as a set function when convenient.

    We will now show that $\phi_{j,k}$ is the geometric mean, using a characterization by \cite{aczel1983procedures}. 
    
    Since $\Phi$ is faithful, $\phi_{j,k}$ satisfies \emph{consensus}, i.e. for any $a \in \mathbb{R}_{>0}$, $\phi_{j,k}(\underbrace{a, \dots, a}_{i \in I}) = a$

    \emph{Reciprocality} requires that if every individual's rate of substitution between pairs of outcomes for $j$ and $k$ is inverted, then the social rate of substitution between $j$ and $k$ is also inverted.
\begin{claim}
    For any $n \in \mathbb{N}$ and $A = \{a_1, \dots, a_n\} \subset \mathbb{R}_{>0}$ (with possible repeats), $\phi_{j,k}(\{a_1, \dots, a_n\}) = \frac{1}{\phi_{j,k}\left(\left\{\frac{1}{a_1}, \dots \frac{1}{a_n}\right\}\right)}$
    
\end{claim}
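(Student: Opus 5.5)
The plan is to prove reciprocality by a relabeling (symmetry) argument: swap the roles of the pairs for $j$ and $k$ and use the fact that $\phi_{j,k}$ and $\phi_{k,j}$ are induced by the same social preference. The first step is to observe from the definition of the rate of interpersonal substitution that, for any self-regarding utilitarian $\succsim$ with weights $\alpha_j,\alpha_k$ on $u_{j,j},u_{k,k}$,
\begin{align*}
\mu^{\succsim}_{j,k}((x_j,y_j),(x_k,y_k)) = \frac{\alpha_j\,(u_{j,j}(x_j)-u_{j,j}(y_j))}{\alpha_k\,(u_{k,k}(x_k)-u_{k,k}(y_k))},
\end{align*}
since $\alpha$ equals the utility gap from $(y_j,y_k,x_{-jk})$ to $(x_j,y_k,x_{-jk})$ divided by the total gap, and $\beta$ is the analogous ratio for $k$. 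Consequently $\mu_{k,j}^{\succsim}((x_k,y_k),(x_j,y_j)) = 1/\mu_{j,k}^{\succsim}((x_j,y_j),(x_k,y_k))$ holds for every voter and for society; this is immediate from the definition, because interchanging the roles of $j$ and $k$ interchanges $\alpha$ and $\beta$.

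The second step applies this to the aggregator. Given $a_1,\dots,a_n>0$, I would choose a profile $\succsim_I$ with $|I|=n$ whose voters have $\mu^i_{j,k}=a_i$. This is possible by picking weights $\alpha_{i,j},\alpha_{i,k}$ appropriately, granted that nonindifferent self-regarding utilities exist for $j$ and $k$; this is the same domain richness implicitly used when $\phi_{j,k}$ was defined on all of $\mathbb{R}^{|I|}_{>0}$. Then
\begin{align*}
\phi_{j,k}(\{a_i\}) = \mu^{\Phi(\succsim_I,J)}_{j,k} = 1/\mu^{\Phi(\succsim_I,J)}_{k,j} = 1/\phi_{k,j}(\{1/a_i\}),
\end{align*}
where $\phi_{k,j}$ denotes the function attached to the ordered pairs $((x_k,y_k),(x_j,y_j))$.

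The main obstacle is the third step, showing $\phi_{k,j}=\phi_{j,k}$ as functions; the claim needs exactly this. I would first try anonymity over subjects, but the axiom only permutes voters. The next attempt is to find a second profile $\succsim_I'$ in which every voter's $(j,k)$-rate is $1/a_i$; such profiles exist by the same weight construction. Homogeneity, applied with a common factor (or its consequence already derived in the text, that the social rate depends only on the vector of individual rates), gives $\mu_{j,k}^{\Phi(\succsim'_I,J)} = \phi_{j,k}(\{1/a_i\})$. The remaining task is to relate this quantity to $\mu_{k,j}^{\Phi(\succsim_I,J)}$. I would route this through a third subject $\ell$, using $|J|\ge 3$: choose profiles in which the $j$ and $k$ weights are exchanged relative to $\ell$, and use the product identity $\mu_{j,k}=\mu_{j,\ell}\,\mu_{\ell,k}$, which holds for any self-regarding utilitarian preference with a fixed $\ell$-pair. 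Together with homogeneity, this should transport the functional form from $(j,k)$ to $(k,j)$.

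If that route stalls, a fallback uses homogeneity directly with $b$ chosen per coordinate. That fails in general, though, because homogeneity requires a common factor. I therefore expect the identification $\phi_{k,j}=\phi_{j,k}$ to need the $|J|\ge3$ device, and that is where most of the effort will go.
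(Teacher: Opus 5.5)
Your first two steps are correct and match the paper. The inversion $\mu_{k,j}=1/\mu_{j,k}$, valid for every voter and for society, gives $\phi_{j,k}(\{a_i\})=1/\phi_{k,j}(\{1/a_i\})$. You also correctly see that what remains is $\phi_{j,k}=\phi_{k,j}$, and that a third subject $\ell$ with the product identity is the tool. But that decisive step is only asserted (``this should transport the functional form''), and homogeneity, the ingredient you propose for it, cannot close it.

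To see this, consider the rule $u_0=\sum_{j\in J}w_j\bigl(\prod_{i\in I}\alpha_{i,j}\bigr)^{1/|I|}u_{j,j}$ with fixed subject-specific constants $w_j>0$, where $w_j\neq w_k$. Society is self-regarding utilitarian, so the inversion and product identities hold, and the rule is homogeneous. Yet $\phi_{j,k}(A)=(w_j/w_k)\bigl(\prod_i a_i\bigr)^{1/n}$ while $\phi_{k,j}(A)=(w_k/w_j)\bigl(\prod_i a_i\bigr)^{1/n}$, so the claim fails. Your idea of exchanging the $j$- and $k$-weights relative to $\ell$ hits the same wall. It only reduces $\phi_{j,k}=\phi_{k,j}$ to an equality between $\phi_{j,\ell}$ and $\phi_{k,\ell}$, and no axiom permutes subjects.

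The missing ingredient is consensus, i.e.\ faithfulness, which is exactly what the example violates. First lift the product identity to the aggregators. A profile with $\mu^i_{j,k}=a_i$ and $\mu^i_{k,\ell}=b_i$ exists (choose three weights per voter), so
\[
\phi_{j,k}(A)\,\phi_{k,\ell}(B)=\phi_{j,\ell}(\{a_1b_1,\dots,a_nb_n\}).
\]
Take $B=\{1,\dots,1\}$ and use $\phi_{k,\ell}(\{1,\dots,1\})=1$; this gives $\phi_{j,k}=\phi_{j,\ell}$. In the same way, $\phi_{k,j}=\phi_{k,\ell}$ and $\phi_{\ell,j}=\phi_{\ell,k}$. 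Now write $A^{-1}=\{1/a_1,\dots,1/a_n\}$ and apply the inversion identity to the pairs $(j,\ell)$ and $(k,\ell)$:
\[
\phi_{j,k}(A)=\phi_{j,\ell}(A)=\frac{1}{\phi_{\ell,j}(A^{-1})}=\frac{1}{\phi_{\ell,k}(A^{-1})}=\phi_{k,\ell}(A)=\phi_{k,j}(A).
\]
Together with your second step, this yields the claim. This is the paper's argument; the work is done by consensus, not by homogeneity, which only serves to make $\phi_{j,k}$ well defined.
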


\begin{proof}
    Consider any $(\succsim_I, J) \in \mathcal{D}$ for which $A = \{\mu_{j, k}^i((x_j, y_j), (x_k, y_k))\}_{i \in I}$ and there exists $\ell \in J$, outcomes $x_\ell, y_\ell$ such that $u_{\ell,\ell}(x_\ell) > u_{\ell, \ell}(y_\ell)$ for $\{u_{\ell',\ell'}\}_{\ell' \in J}$ consistent with $\succsim_I$. Distinct $j,k, \ell \in J$ exist since $\vert J \vert \geq 3$.
    
    For each $i \in I$, since $\succsim_i$ is self-regarding utilitarian, $\mu_{j, k}^i((x_j, y_j), (x_k, y_k)) = \frac{1}{\mu_{k, j}^i((x_k, y_k), (x_j, y_j))}$.

    Since society is self-regarding utilitarian, $\mu_{j, k}^{\Phi(\succsim_I, J)}((x_j, y_j), (x_k, y_k)) = \frac{1}{\mu_{k, j}^{\Phi(\succsim_I, J)}((x_k, y_k), (x_j, y_j))}$.
    
    Define $\phi_{k,j}$ similarly to $\phi_{j,k}$. The aforementioned facts imply that $\phi_{j,k}(\{a_1, \dots, a_n\}) = \frac{1}{\phi_{k,j}\left(\left\{\frac{1}{a_1}, \dots \frac{1}{a_n}\right\}\right)}$.

    For each $i \in I$, since $\succsim_i$ is self-regarding utilitarian, $\mu_{j, k}^i((x_j, y_j), (x_k, y_k))\mu_{k, \ell}^i((x_k, y_k), (x_\ell, y_\ell)) = \mu_{j, \ell}^i((x_j, y_j), (x_\ell, y_\ell))$. Similarly, for social preferences, $\mu_{j, k}^{\Phi(\succsim_I, J)}((x_j, y_j), (x_k, y_k))\mu_{k, \ell}^{\Phi(\succsim_I, J)}((x_k, y_k), (x_\ell, y_\ell)) = \mu_{j, \ell}^{\Phi(\succsim_I, J)}((x_j, y_j), (x_\ell, y_\ell))$

    Define $\phi_{j,\ell}$, $\phi_{\ell, j}$, $\phi_{k,\ell}$, and $\phi_{\ell, k}$ similarly. Then, for any $B = \{b_1, \dots, b_n\} \subset \mathbb{R}_{>0}$ with possible repeats, $\phi_{j,k}(\{a_1, \dots, a_n\}) \phi_{k,\ell}(\{b_1, \dots, b_n\}) = \phi_{j,\ell}(\{a_1b_1, \dots, a_n b_n\})$. Note that for any such $A$ and $B$, there exists a preference profile for which $\{\mu_{j, k}^i\}$ and $\{\mu_{k, \ell}^i\}$ correspond to $A$ and $B$, respectively.

    In particular, $\phi_{j,k}(\{a_1, \dots, a_n\}) \phi_{k,\ell}(\{1, \dots, 1\}) = \phi_{j,\ell}(\{a_1, \dots, a_n\})$ by consensus. Since $A$ is arbitrary, $\phi_{j,k} = \phi_{j,\ell}$.

    By the same arguments, $\phi_{k,j} = \phi_{k,\ell}$ and $\phi_{\ell,j} = \phi_{\ell,k}$.

    Combining these two facts, $\phi_{j,k}(\{a_1, \dots, a_n\}) = \phi_{j,\ell}(\{a_1, \dots, a_n\}) = \frac{1}{\phi_{\ell,j}\left(\left\{\frac{1}{a_1}, \dots \frac{1}{a_n}\right\}\right)} = \frac{1}{\phi_{\ell,k}\left(\left\{\frac{1}{a_1}, \dots \frac{1}{a_n}\right\}\right)} \\
     = \phi_{k,\ell}(\{a_1, \dots, a_n\}) = \phi_{k,j}(\{a_1, \dots, a_n\})$.

\end{proof}

$\phi_{j,k}$ is \emph{associative} if for any $n \in \mathbb{N}$, set $A$ with $n$ strictly positive reals (with possible repeats), and nonempty $A' \subsetneq A$, $\phi_{j,k}(\{\underbrace{\phi_{j,k}(A'),\dots \phi_{j,k}(A')}_{\vert A' \vert}\} \cup A \backslash A') = \phi_{j,k}(A)$. Let $(\succsim_I, J)$ be such that $\vert I \vert = n$ and $\{\mu_{j,k}^i(x_j, y_j), (x_k, y_k))\}_{i \in I} = A$. Let $I' \subsetneq I$ be such that $\{\mu_{j,k}^i(x_j, y_j), (x_k, y_k))\}_{i \in I'} = A'$. 

Note that for any $(\succsim_i, J) \in \mathcal{D}$,  $\Phi(\succsim_i, J) = \succsim_i$ by faithfulness. Since $\Phi$ is representative consistent,
\begin{align*}
    & \Phi((\underbrace{\Phi(\succsim_{I'}, J), \dots, \Phi(\succsim_{I'}, J)}_{i \in I'}, \underbrace{\Phi(\succsim_i, J), \dots, \Phi(\succsim_i, J)}_{i \in I \backslash I'}), J) \\
    & =  \Phi((\underbrace{\Phi(\succsim_{I'}, J), \dots, \Phi(\succsim_{I'}, J)}_{i \in I'}, \succsim_{I \backslash I'}), J)= \Phi(\succsim_I, J)
\end{align*}
and so $\phi_{j,k}$ is associative.

    $\phi_{j,k}$ is \emph{homogeneous} if for any $A = \{a_1, \dots, a_n\}\subset \mathbb{R}_{>0}$ with possible repeats and $b > 0$, $\phi_{j,k}(\{ba_1, \dots, ba_n\}) = b \phi_{j,k}(\{a_1, \dots, a_n\})$. Let $(\succsim_I, J), (\succsim_I', J) \in \mathcal{D}$ be such that $\vert I \vert = n$, $\{\mu_{j,k}^i(x_j, y_j), (x_k, y_k))\}_{i \in I} = A$, and $\{\mu_{j,k}^{i\prime}(x_j, y_j), (x_k, y_k))\}_{i \in I} = \{ba_1, \dots, ba_n\}$. Homogeneity of $\Phi$ implies that $\mu_{j,k}^{\Phi(\succsim_I', J)}((x_j, y_j), (x_k, y_k)) =b\mu_{j,k}^{\Phi(\succsim_I, J)}((x_j, y_j), (x_k, y_k))$. Thus $\phi_{j,k}$ is homogeneous.
    
    $\phi_{j,k}$ is \emph{monotone} if it is strictly increasing in each argument. Consider any $n \in\mathbb{N}$ and any $A, A' \subset \mathbb{R}_{>0}$ of size $n$ (with possible repeats) for which $A$ and $A'$ differ only in one member $a \in A, a' \in A'$, where $a' = a + \varepsilon$ for some $\varepsilon > 0$. Let co-consistent $(\succsim_I, J), (\succsim_I', J) \in \mathcal{D}$ be such that $\vert I \vert = n$ and
    \begin{itemize}
        \item $A = \{\mu_{j,k}^i(x_j, y_j), (x_k, y_k))\}_{i \in I}$ and $A' = \{\mu_{j,k}^{i\prime}(x_j, y_j), (x_k, y_k))\}_{i \in I}$.
        \item There is $i^* \in I$ for which $a = \mu_{j,k}^{i^*}(x_{j}, y_{j}), (x_k, y_k))$ and $a' = \mu_{j,k}^{i^*\prime}(x_{j}, y_{j}), (x_k, y_k))$
        \item $\succsim_i = \succsim_i'$ for $i \in I \backslash i^*$.
        \item $\mu_{\ell,k}^{i^*}((x_j, y_j), (x_k, y_k)) = \mu_{\ell,k}^{i^*\prime}((x_\ell, y_\ell), (x_k, y_k))$ for all $\ell,k \in J\backslash j$, $x_\ell, y_\ell \in X_\ell$, $x_k, y_k \in X_k$
    \end{itemize}

    Since $\Phi$ is monotone, $\mu_{j,k}^{\Phi(\succsim_I, J)}((x_j, y_j), (x_k, y_k)) < \mu_{j,k}^{\Phi(\succsim_I', J)}((x_j, y_j), (x_k, y_k))$. Thus $\phi_{j,k}$ is increasing in each argument.

    We'd like to show that $\phi_{j,k}$ is continuous in each argument. Consider any sequence $a^1, a^2, \dots \in \mathbb{R}_{>0}$ that converges to $a^\infty \in \mathbb{R}_{>0}$. Consider any $n \in\mathbb{N}$, and sequence of vectors $A^1 = (a_1, \dots, a^1, \dots, a_n), A^2 = (a_1, \dots, a^2, \dots, a_n), \dots, \in \mathbb{R}_{>0}^n$ which is constant except for some $i^*$th component. Let $A^\infty= (a_1, \dots, a^\infty, \dots, a_n)$.  We want to show that sequence $\phi_{j,k}(A^1), \phi_{j,k}(A^2), \dots$ converges to $\phi_{j,k}(A^\infty)$.

    Each element of the sequence $A^p$ corresponds with a preference profile $\succsim_I^p$, and $A^\infty$ corresponds with $\succsim_I^\infty$, for which: 
    \begin{itemize}
        \item For each $i \neq i^*$, $\succsim_i^p = \succsim_i^{p'}$ for any $p, p' \in \mathbb{N} \cup \infty$, 
        \item $a_i = \mu_{j,k}^{i,p}((x_j, y_j), (x_k, y_k))$ for any $p \in \mathbb{N} \cup \infty$
        \item $a^p= \mu_{j,k}^{i^*,p}((x_j, y_j), (x_k, y_k))$ for any $p \in \mathbb{N} \cup \infty$, 
        \item $\mu_{\ell,k}^{i^*, p}((x_\ell, y_\ell), (x_k, y_k)) = \mu_{\ell,k}^{i^*, p'}((x_\ell, y_\ell), (x_k, y_k))$ for all $\ell \in J\backslash j$, all $x_\ell, y_\ell \in X_\ell$, for any $p, p' \in \mathbb{N} \cup \infty$, 
    \end{itemize}

    Note that for any $(\succsim_I, J) \in \mathcal{D}$, the collection of $\mu_{j,k}^i((x_j, y_j),(x_k,y_k))$ across $i \in I$, $j,k \in J$, $x_j, y_j \in X_j$, and $x_k, y_k \in X_k$ fully captures the preferences. And so $\succsim_I^p$ converges to $\succsim_I^\infty$. By continuity, $\Phi(\succsim_I^\ell, J)$ converges to $\Phi(\succsim_I, J)$. 
    
    \cite{aczel1983procedures} characterizes the geometric mean. Since $\phi_{j,k}$ satisfies symmetry, consensus, reciprocality, associativity, homogeneity, monotonicity, and continuity, $\phi_{j,k}$ is the geometric mean, i.e. for any $A =\{a_1, \dots, a_n\}$ of $n$ strictly positive reals (with possible repeats), $\phi_{j,k}(A) = \left(\Pi_{1\leq i \leq n}a_i\right)^{\frac{1}{n}}$.

    Since $j,k$ are arbitrary individuals for whom there exists $x_j, y_j \in X_j$ and  $x_k, y_k \in X_k$ for which $u_{j,j}(x_j) > u_{j,j}(y_j)$ and $u_{k,k}(x_k) > u_{k,k}(y_k)$, we have that $\phi_{j',k'}$ is the geometric mean for any pair of individuals with nonconstant self-regarding utilities.

    Consider any social choice problem $(\succsim_I, J) \in \mathcal{D}$. Let $\{u_{j,j}\}_{j \in J}$ be self-regarding utilities that are consistent with $\succsim_I$ and normalized such that for each $j$ with nonconstant self-regarding utility, $u_{j,j}(x_j) = 1$ and $u_{j,j}(y_j) = 0$, where $x_j$ and $y_j$ are used to define the $\phi$'s. For individuals $j \in J$ with constant self-regarding utilities, let $u_{j,j} = 0$.

    Express each utility in terms of these normalized utilities $u_i(x) = \sum_{j \in J} \alpha_{i,j} u_{j,j}(x_j)$.
    
    Since $\Phi$ is self-regarding utilitarian, let $\{\omega_{0,j}\}_{j \in J}$ be the weights in the representation of $\Phi(\succsim_I, J)$, i.e. $u_0(x) = \sum_{j \in J} \omega_{0,j} u_{j,j}(x_j)$ for all $x \in X$. Since $u_0$ is unique up to positive affine transformation, it is sufficient to show that for any $j,k \in J$ with non-constant self-regarding utility, $\frac{\omega_{0,j}}{\omega_{0,k}} = \left(\Pi_{i \in I} \frac{\alpha_{i, j}}{\alpha_{i,k}}\right)^{\frac{1}{\vert I \vert }}$.

    For each $i$, note that $\frac{\alpha_{i,j}}{\alpha_{i,k}} = \mu_{j,k}^i((x_j, y_j), (x_k, y_k))$, and $\frac{\omega_{0,j}}{\omega_{0,k}} = \mu_{j,k}^{\Phi(\succsim_I, J)}(x_j, y_j, x_k, y_k)$ for each $j,k \in I$ with non-constant self-regarding utility. Since each $\phi_{j,k}$ is the geometric mean, $\frac{\omega_{0,j}}{\omega_{0,k}} = \left(\Pi_{i \in I} \frac{\alpha_{i, j}}{\alpha_{i,k}}\right)^{\frac{1}{\vert I \vert }}$.

    For those individuals $i$ with constant self-regarding utility, adjust the weights $\omega_{0,i}$ such that they are vacuously geometric self-regarding utilitarian.

\end{proof}
\subsection{Proof of Theorem \ref{theorem3}}

We prove a stronger version of theorem \ref{theorem3} here, for which theorem \ref{theorem3} is a corollary. First, we must introduce some more notation.

Define binary relation $C \subseteq Y \times Y$ by for any $x, y \in Y$, $xCy$ if there is a finite sequence $x, \dots, y$ of outcomes in $Y$ for which between any pair of adjacent outcomes $x^k, x^{k+1}$, $x^k_i = x^{k+1}_i$ for at least one individual $i \in I$. Say that any such sequence \emph{$C$-connects} $x$ and $y$.

Note that $C$ is reflexive, symmetric, and transitive, and so is an equivalence relation. Let $\mathcal{B}$ be a partition of $Y$ where each block $B \in \mathcal{B}$ is an equivalence class with respect to $C$. That is, each $x \in Y$ is a member of exactly one block $B \in \mathcal{B}$, and for each block $B \in \mathcal{B}$, $x, y \in B$ if and only if $x C y$.

\begin{lemma}
Suppose that each $\succsim_i$ satisfies mutual independence of $Y_i$ and $Y_{-i}$, and that each block $B \in \mathcal{B}$ is finite. $\succsim_0$ satisfies individual sovereignty if and only if $u_0$ is a positive linear combination of $\{u_{i,i}\}_{i \in I}$.
\end{lemma}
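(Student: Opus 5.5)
The plan is to follow the proof of theorem \ref{theorem1}, replacing the free product structure of $X$ with connecting sequences inside $Y$. The block finiteness assumption is what lets us reduce everything to finite linear algebra.

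\emph{Necessity.} Suppose $u_0=\sum_i w_i u_{i,i}$ with $w_i>0$, with the $u_{i,i}$ taken from proposition \ref{prop3}. Take $P,Q\in\Delta(Y)$ with $P_{-i}=Q_{-i}$. Then every term $j\neq i$ integrates to the same value under $P$ and $Q$, since $x_j$ is a coordinate of $x_{-i}$. Likewise $\int u_{i,-i}\,dP=\int u_{i,-i}\,dQ$. Hence
\[
P\succsim_0 Q \iff \int u_{i,i}\,dP_i\ge\int u_{i,i}\,dQ_i \iff P\succsim_i Q .
\]

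\emph{Sufficiency.} I would recast sovereignty as a linear condition on the vector $u_0|_B$ for each block $B\in\mathcal B$, which is finite. Any pair $P,Q$ with $P_{-i}=Q_{-i}$ is built from signed measures $\nu=P-Q$ on $Y$ whose $X_{-i}$-marginal vanishes. Call the set of such signed measures $V_i$.

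For $\nu\in V_i$, the functional $\nu\mapsto\int u_i\,d\nu$ equals $\int u_{i,i}\,d\nu$, because the $u_{i,-i}$ part is killed by the vanishing marginal. Sovereignty says that $\int u_0\,d\nu\ge0$ iff $\int u_i\,d\nu\ge0$ for all $\nu\in V_i$; scaling $\nu$ by positive constants lets us pass from differences of lotteries to arbitrary elements of $V_i$. Two linear functionals on a finite-dimensional space with the same nonnegativity half-space must be positively proportional, or both vanish. So there is $w_i>0$ with
\[
\int (u_0-w_i u_{i,i})\,d\nu=0 \quad\text{for all }\nu\in V_i .
\]
By duality, a function annihilating $V_i$ is exactly one that depends only on $x_{-i}$. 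Hence $u_0(x)=w_iu_{i,i}(x_i)+g_i(x_{-i})$ on $Y$. The restriction to measures supported in one block is harmless, since blocks of $C$ refine the $C_i$-blocks, and the $V_i$ conditions decompose over $C_i$-components.

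Next I would run the analogue of the iteration in theorem \ref{theorem1}. Define $h=u_0-\sum_i w_iu_{i,i}$. The goal is to show that $h$ is constant on each block of $\mathcal B$, and then that all these block constants can be absorbed.

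For each $i$, $h(x)=g_i(x_{-i})-\sum_{j\ne i}w_ju_{j,j}(x_j)$ depends only on $x_{-i}$. So $h$ is invariant when only $x_i$ changes, for every $i$. I expect the main obstacle here: invariance under single-coordinate changes is not the same as invariance along $C$-connecting steps, where adjacent outcomes agree in \emph{some} coordinate but may differ in many.

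I would first try to exploit the freedom in the subutility indices. Within each $C_i$-block, $u_{i,i}$ can be shifted by a constant. That extra freedom should let us choose the $u_{i,i}$ so that $h$ becomes constant on each $C$-block; concretely, pick a spanning tree of connecting steps inside $B$ and fix the shifts along it. The statement's "positive linear combination of $\{u_{i,i}\}$" is up to this choice and positive affine transformation. Finally, the differing constants across $C$-blocks cannot be detected by any $\nu\in V_i$, since block pieces are separate. They can therefore be absorbed into the $u_{i,i}$ shifts, for example all into individual 1's shifts, which are free across $C_1$-blocks and hence across $C$-blocks.

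The delicate point is checking that the shifts required by different individuals on overlapping blocks are mutually compatible. If the spanning-tree bookkeeping fails, I would fall back to a direct finite linear system for the shift constants on each block and verify its solvability using the $V_i$ identities.
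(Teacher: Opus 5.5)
\textbf{Verdict: genuine gap, and it cannot be closed, because the lemma is false under its stated hypotheses.}

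\textbf{What is correct.} Your necessity direction is correct, and so is the first half of sufficiency. Applying the equal-kernel argument to all of $V_i$ at once yields a single $w_i>0$ with $u_0(x)=w_iu_{i,i}(x_i)+g_i(x_{-i})$ on $Y$; finite dimensionality is not needed for this. That step plays the role of the paper's cross-block claim $b_B=b_{B'}$, without first assuming $u_0$ is additive. There is one minor slip: since $C_i\subseteq C$, the $C_i$-blocks refine the $C$-blocks, not the reverse. This is harmless, because $V_i$ splits over the fibres on which $x_{-i}$ is fixed.

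\textbf{The gap.} It is the step you flag yourself: passing from ``$h=u_0-\sum_iw_iu_{i,i}$ is unchanged whenever one coordinate changes inside $Y$'' to ``$h$ is a sum of admissible block-wise shifts plus a constant.'' Neither spanning-tree bookkeeping nor an auxiliary linear system can deliver this. Take the paper's house-allocation example: $n=3$, and $Y$ is the six bijections onto $\{a,b,c\}$, which form a single finite $C$-block. On this $Y$, $x_{-i}$ determines $x$, so $P_{-i}=Q_{-i}$ forces $P=Q$. Mutual independence and sovereignty therefore hold vacuously, for instance with $u_0=\mathbf{1}_{\{(a,b,c)\}}$. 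Each $C_i$-block is $\{x\in Y: x_i=h\}$, so the admissible $u_{i,i}$ are arbitrary functions of $x_i$, and every $c+\sum_iw_iu_{i,i}(x_i)$ is additively separable. Now give weight $+1$ to $(a,b,c),(b,c,a),(c,a,b)$ and weight $-1$ to the other three outcomes. Each event $\{x_i=h\}$ contains one outcome of each sign, so every additively separable function has weighted sum $0$. By contrast, $\mathbf{1}_{\{(a,b,c)\}}$ has weighted sum $1$. So the lemma, and theorem \ref{theorem3} with it, fails here, as the paper's own remark that sovereignty places no restriction on social preferences in house allocation already suggests.

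\textbf{The paper's proof fails at the same point.} It obtains additivity of $u_0$ by asserting that $\succsim_0$ is indifferent between any two lotteries with equal one-dimensional marginals, and then invoking Fishburn's 1967 result. But both individual mutual independence and sovereignty require $P_{-i}=Q_{-i}$. That condition fails, for example, for the uniform lotteries on the three positively weighted and the three negatively weighted bijections above: their marginals are all equal, yet $u_0=\mathbf{1}_{\{(a,b,c)\}}$ ranks them strictly.

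\textbf{What your argument does establish.} Your first step proves the correct characterization: sovereignty holds if and only if for each $i$ there are $w_i>0$ and $g_i$ with $u_0(x)=w_iu_{i,i}(x_i)+g_i(x_{-i})$ on $Y$. The additive conclusion needs an extra hypothesis. One sufficient condition is that any two outcomes in a $C$-block are linked by a chain in $Y$ whose consecutive members differ in at most one coordinate; this holds for $Y=X$ and for house allocation with a spare house. Under it, $h$ is constant on each $C$-block and can be absorbed into individual $1$'s shifts exactly as you propose.
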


\emph{Sovereignty is sufficient.}

\begin{proof}

 Consider lotteries $P, Q \in \Delta(Y)$ such that $P_i = Q_i$ for each $i \in I$. Consider any $i$. By mutual independence, $P \sim_i Q$. By sovereignty, $P \sim_0 Q$. Say that $\succsim_0$ is mutually independent on $Y_1, Y_2, \dots, Y_n$. By \cite{fishburn1967incompleteapplication}, there exists $u_{0,1}, u_{0,2}, \dots, u_{0,n}$ such that $u_0(x) = \sum_{i \in I} u_{0,i}(x_i)$. Note that \cite{fishburn1967incompleteapplication} requires that each block $B \in \mathcal{B}$ is finite.

Fix an individual $i \in I$, and fix a block $B \in \mathcal{B}_i$. 

Consider any $P, Q \in \Delta(B)$ for which $P_{-i} = Q_{-i}$. By sovereignty,
    \begin{align*}
        P & \succsim_0 Q \\
        \iff \int_X(u_{0,1}(x_1) + \dots + & u_{0,i}(x_i) + \dots + u_{0,n}(x_n))dP(x) \\
        \geq \int_X(u_{0,1}(x_1) + \dots + & u_{0,i}(x_i) + \dots + u_{0,n}(x_n))dQ(x)  \\
        \iff \int_{X_i} u_{0,i}(x_i) dP_i(x_i) & \geq \int_{X_i} u_{0,i}(x_i) dQ_i(x_i) \\
        \iff \int_{X_i} u_{i,i}(x_i) dP_i(x_i) & \geq \int_{X_i} u_{i,i}(x_i) dQ_i(x_i) \\
        \iff \int_X(u_{i,i}(x_i) + u_{i,-i}(x_{-i}))dP(x) & \geq \int_X(u_{i,i}(x_i) + u_{i,-i}(x_{-i}))dQ(x) \\
        \iff P &\succsim_i Q
    \end{align*}

So $u_{i,i}$ must be a positive affine transformation of $u_{0,i}$ for outcomes in block $B$. Let $a_B \in \mathbb{R}$ and $b_B \in \mathbb{R}_{++}$ be such that $u_{i,i}(x_i) = a_B + b_B u_{0,i}(x_i)$ for any $x \in B$. Similarly, for a distinct $B' \in \mathcal{B}_i$, let $a_{B'} \in \mathbb{R}$ and $b_{B'} \in \mathbb{R}_{++}$ be such that $u_{i,i}(x_i) = a_{B'} + b_{B'} u_{0,i}(x_i)$ for any $x \in B'$.

\begin{claim}
    $b_B = b_{B'}$ for any two blocks $B, B' \in \mathcal{B}_i$.
\end{claim}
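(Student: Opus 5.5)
The idea is to link two different $C_i$-blocks using lotteries that mix outcomes from both blocks, while keeping the marginal on $X_{-i}$ fixed so that sovereignty still applies. Within a single block we already know that $u_{i,i}=a_B+b_Bu_{0,i}$. Across blocks, the only common object is the single index $u_i=u_{i,i}+u_{i,-i}$ on all of $Y$. The scale $b_B$ is therefore pinned down only by comparing utility differences from $B$ against utility differences from $B'$ inside one lottery comparison.

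\emph{Construction.} Fix $B,B'\in\mathcal{B}_i$. First dispose of the degenerate case: if $u_{0,i}$ (equivalently $u_{i,i}$) is constant on the $i$-outcomes of $B$, then $b_B$ is not determined and can be set equal to $b_{B'}$ without loss. In particular this covers blocks containing a single $i$-outcome. Otherwise, pick $(x_i,z_{-i}),(y_i,z_{-i})\in B$ with $u_{0,i}(x_i)>u_{0,i}(y_i)$. If no two points of $B$ share a $z_{-i}$, use instead a $C_i$-path inside $B$, whose steps alternate between fixing $x_i$ and fixing $x_{-i}$; the steps that fix $x_{-i}$ are exactly pairs of this form, and a nonconstant block must contain one such step with a nonzero difference. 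Similarly pick $(x'_i,w_{-i}),(y'_i,w_{-i})\in B'$ with $u_{0,i}(x'_i)<u_{0,i}(y'_i)$, swapping the labels if necessary. For $p\in(0,1)$ define
\[
P=p\,(x_i,z_{-i})+(1-p)(x'_i,w_{-i}),\qquad Q=p\,(y_i,z_{-i})+(1-p)(y'_i,w_{-i}).
\]
By construction $P_{-i}=Q_{-i}$, so sovereignty gives $P\succsim_0Q\iff P\succsim_iQ$.

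\emph{Computation.} Write $\Delta=u_{0,i}(x_i)-u_{0,i}(y_i)>0$ and $\Delta'=u_{0,i}(x'_i)-u_{0,i}(y'_i)<0$. Using the additive form $u_0=\sum_j u_{0,j}$ from the first step of the lemma, the $u_{0,j}$ terms for $j\neq i$ cancel, and
\[
\E u_0(P)-\E u_0(Q)=p\Delta+(1-p)\Delta'.
\]
For individual $i$, the $u_{i,-i}$ terms cancel because the marginals on $X_{-i}$ agree. The within-block affine relations then give
\[
\E u_i(P)-\E u_i(Q)=p\,b_B\Delta+(1-p)\,b_{B'}\Delta'.
\]
Here the constants $a_B,a_{B'}$ drop out, since each block enters only through a difference of two of its own points.

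\emph{Conclusion and main obstacle.} Both expressions are affine in $p$ and change sign exactly once on $(0,1)$. Sovereignty forces the two sign patterns to coincide for every $p$, so the roots must agree:
\[
\frac{-\Delta'}{\Delta-\Delta'}=\frac{-b_{B'}\Delta'}{b_B\Delta-b_{B'}\Delta'},
\]
which simplifies to $b_B=b_{B'}$. The main obstacle is the existence step. I must make sure the two points in each block genuinely share their $(-i)$-component and have nonzero difference. I also need the required signs ($\Delta>0$ in one block, $\Delta'<0$ in the other), which I obtain by reordering the labels within $B'$. Finally, I must check that mixing across blocks $B\neq B'$ is permitted, which holds because sovereignty is stated for all of $\Delta(Y)$, not blockwise.
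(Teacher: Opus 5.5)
Your proposal is correct and is essentially the paper's argument: the paper also mixes a within-$B$ pair and a within-$B'$ pair that each share a $(-i)$-component, so that $P_{-i}=Q_{-i}$, and it handles the degenerate constant-block case the same way. The only cosmetic difference is that the paper argues by contradiction at the single mixing weight (proportional to the two $u_{0,i}$-differences) where society is indifferent and shows that individual $i$ is then strict; you instead equate the two indifference points in $p$, which is the same computation.
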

\begin{proof}
     Assume towards a contradiction that $b_{B} > b_{B'}$. Consider distinct $x_i, y_i$ such that $(x_i, x_{-i}), (y_i, x_{-i}) \in B$ for some $x_{-i}$ and $u_{i,i}(x_i) > u_{i,i}(y_i)$, and distinct $x_i', y_i'$ such that $(x_i', x_{-i}'), (y_i', x_{-i}') \in B$ for some $x_{-i}'$ and $u_{i,i}(x_i') > u_{i,i}(y_i')$ (If $u_{i,i}(x_i) = u_{i,i}(y_i)$ for all $(x_i, x_{-i}), (y_i, x_{-i}) \in B$, then $a_B$ and $b_B$ can be chosen such that $b_B = b_{B'}$. Similarly for $B'$).

     By mutual independence, $u_i(x_i, x_{-i}) > u_i(y_i, x_{-i})$, and by sovereignty over degenerate lotteries $(x_i, x_{-i})$ and $(y_i, x_{-i})$, $u_0(x_i, x_{-i}) > u_0(y_i, x_{-i})$. Since $u_0(x) = \sum_{j \in I}u_{0,j}(x_j)$, $u_{0,i}(x_i) > u_{0,i}(y_i)$. Similarly, $u_{0,i}(x_i') > u_{0,i}(y_i')$.
     
     $b_{B} > b_{B'}$ implies that $\frac{u_{i,i}(x_i) - u_{i,i}(y_i)}{u_{i,i}(x_i') - u_{i,i}(y_i')} > \frac{u_{0,i}(x_i) - u_{0,i}(y_i)}{u_{0,i}(x_i') - u_{0,i}(y_i')}$. Rearranging, $(u_{0,i}(x_i') - u_{0,i}(y_i'))u_{i,i}(x_i) + (u_{0,i}(x_i) - u_{0,i}(y_i))u_{i,i}(y_i') > (u_{0,i}(x_i') - u_{0,i}(y_i'))u_{i,i}(y_i) + (u_{0,i}(x_i) - u_{0,i}(y_i))u_{i,i}(x_i')$. Consider lotteries
     \begin{align*}
         P = \left(\frac{u_{0,i}(x_i') - u_{0,i}(y_i')}{(u_{0,i}(x_i) - u_{0,i}(y_i)) + (u_{0,i}(x_i') - u_{0,i}(y_i'))}(x_i, x_{-i}), \frac{u_{0,i}(x_i) - u_{0,i}(y_i)}{(u_{0,i}(x_i) - u_{0,i}(y_i)) + (u_{0,i}(x_i') - u_{0,i}(y_i'))}(y_i', x_{-i}')\right) \\
         Q = \left(\frac{u_{0,i}(x_i') - u_{0,i}(y_i')}{(u_{0,i}(x_i) - u_{0,i}(y_i)) + (u_{0,i}(x_i') - u_{0,i}(y_i'))}(y_i, x_{-i}), \frac{u_{0,i}(x_i) - u_{0,i}(y_i)}{(u_{0,i}(x_i) - u_{0,i}(y_i)) + (u_{0,i}(x_i') - u_{0,i}(y_i'))}(x_i', x_{-i}')\right)
     \end{align*}

     Note that $P_{-i} = Q_{-i}$.

     Since $\succsim_i$ is vNM-rational, $P \succ_i Q$. But $P \sim_0 Q$, and this contradicts sovereignty.
\end{proof}

Let $b_i$ be the common scalar on $u_{0,i}$, i.e. $u_{i,i}(x_i) = a_B + b_i u_{0,i}(x_i)$ for every $x \in B$, and for every $B \in \mathcal{B}_i$.

Recall that the subutility indices of mutually independent $u_i$ are unique up to simultaneous transformations $u_{i,i}'(x_i) = u_{i,i}(x_i) + c$ and $u_{i,-i}'(x_i) = u_{i,-i}(x_i) - c$ for any $c \in \mathbb{R}$, for all outcomes $x \in B$ within each block, and that $u_{i,i}$ is not unique across blocks, i.e. these transformations can be made independently across blocks. 

Thus, $b_i u_{0,i}$ is a transformation of $u_{i,i}$, where $b_i u_{0,i}(x_i) = u_{i,i}(x_i) - a_B$ for each block $B \in \mathcal{B}_i$. Let $u_{0,-i}(x_{-i}) = u_{i,-i}(x_{-i}) + a_B$ for each block $B \in \mathcal{B}_i$. Then $u_i(x) = b_i u_{0,i}(x_i) + u_{0,-i}(x_{-i})$ for all $x \in Y$.

Since the fixed $i$ was arbitrary, for all $i \in I$, $u_i(x) = b_i u_{0,i}(x_i) + u_{0,-i}(x_{-i})$ for all $x \in Y$.
\end{proof}

We omit the necessity of sovereignty.

\subsection{Uniqueness of society's additive representation} \label{socialunique}

Fix a block $B \in \mathcal{B}$ (society's partition of social outcomes). Subutility indices $\{u_{0,i}\}_{i \in I}$ can be simultaneously transformed in the following way and maintain that they sum to the utility index $u_0$: $u_{0,i}'(x_i) = u_{0,i}(x_i) + c_i$ for all $x \in B$ and for $i = 1, \dots, n$, where $\{c_i\}_{i \in I} \in \mathbb{R}^n$ is such that $\sum_{i \in I} c_i = 0$. 

In addition, consider any set of individual outcomes $\{x_\alpha, x_\beta, \dots, x_\omega\}$ with the following properties: i) each member $x_\gamma$ is an individual outcome of a distinct individual $\gamma$, ii) all members are components of at least one $x \in B$, iii) if any one member is a component of $x \in B$, then all other members are also components in $x$. Then, subutility indices $u_{0,\alpha}, u_{0,\beta}, \dots, u_{0, \omega}$ can be simultaneously transformed in the following way and maintain that they sum to the utility index $u_0$: $u_{0,\gamma}'(x_\gamma) = u_{0,\gamma}(x_\gamma) + c_\gamma$ for all $x \in B$ and for $\gamma = \alpha, \beta, \dots, \omega$, where constants $c_\alpha, c_\beta, \dots, c_\omega$ are such that $c_\alpha + c_\beta + \dots + c_\omega = 0$.

Similarly to the individual additive representation, since social outcomes in different blocks of $\mathcal{B}$ don't share any components, these transformations can be done within each block independently of other blocks. Thus, social preferences that satisfy sovereignty has fixed preferences over individual $i$'s outcomes within each block, but society's preference over $i$'s outcome across blocks depends on the particular additive representation.
\subsection{Proof of Proposition \ref{prop2}}

\emph{Sufficiency of mutual independence of $X_i$ and $X_{-i}$}

\begin{proof}

This proof follows the steps of \cite{fishburn1965independence}, and uses the nonatomicity of beliefs to construct the needed lotteries.
    
Pick any $(x_i^*, x_{-i}^*) \in X$ and define $u_{i,i}$ and $u_{i,-i}$ as before. Then, for any $x_i \in X_i$ and $x_{-i} \in X_{-i}$, 
    \begin{align*}
        u_{i,i}(x_i) + u_{i,-i}(x_{-i}) = u_i(x_i, x_{-i}^*) + u_i(x_i^*, x_{-i}) - u_i(x_i^*, x_{-i}^*) 
    \end{align*}

    Since $\pi_i$ is nonatomic, there exist acts $f, g \in F$ such that $P^{f,i} = \left(\frac{1}{2}(x_i, x_{-i}), \frac{1}{2}(x_i^*, x_{-i}^*)\right)$ and $P^{g,i} = \left(\frac{1}{2}(x_i^*, x_{-i}), \frac{1}{2}(x_i, x_{-i}^*)\right)$. Since $\succsim_i$ satisfies mutual independence of $X_i$ and $X_{-i}$, $f \sim_i g$. Since $\succsim_i$ is SEU-rational,
    \begin{align*}
        & \int_S u_i(f(s)) d \pi_i(s) = \frac{1}{2}u_i(x_i, x_{-i}) + \frac{1}{2} u_i(x_i^*, x_{-i}^*) \\
        & = \frac{1}{2}u_i(x_i^*, x_{-i}) + \frac{1}{2}u_i(x_i, x_{-i}^*) = \int_S u_i(g(s)) d \pi_i(s)
    \end{align*} 

    And so $u_i(x_i, x_{-i}^*) + u_i(x_i^*, x_{-i}) - u_i(x_i^*, x_{-i}^*) = u_i(x_i, x_{-i})$.
\end{proof}

We omit the necessity of mutual independence of $X_i$ and $X_{-i}$.

\subsection{Proof of Theorem \ref{theorem2}}

\emph{Sufficiency of restricted sovereignty.}

We begin with showing that social beliefs $\pi_0$ is an affine combination of individual beliefs $\{\pi_i\}_{i \in I}$.

\emph{Aggregation of beliefs.}

\begin{proof}

Denote $\bm{\pi} = (\pi_i)_{i = 1}^n$ and $\hat{\bm{\pi}} = (\pi_i)_{i = 0}^n$. Let $\mathbf{Z}$ and $\hat{\mathbf{Z}}$ be the ranges of the vector measures $\bm{\pi}$ and $\hat{\bm{\pi}}$, respectively. Note that for any $\hat{\mathbf{z}} \in \hat{\mathbf{Z}}$, $\hbmu(S) - \hbz \in \hbbz$, because for event $E$ such that $\hbmu(E) = \hbz$, $\hbmu(E^C) = \hbmu(S) -\hbz$. By the Lyapunov theorem, both $\mathbf{Z}$ and $\hbbz$ are convex.\footnote{The Lyapunov theorem states that the range of a nonatomic finite-dimensional vector measure is closed and convex. See \cite{liapounoff1940fonctions} or \cite{ross2005elementary} for a simpler proof of the theorem.}

\begin{claim} \label{claim.sub1}
    If $(z_0, \frac{1}{2}\bm\pi(S)) \in \hbbz$, then $z_0 = \frac{1}{2}$.
\end{claim}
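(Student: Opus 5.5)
The idea is to find an event $E$ with $\hbmu(E)=\hbz=(z_0,\tfrac12\bm\pi(S))$, that is, an event every individual regards as having probability $\tfrac12$. Such an event belongs to $\Lambda$, so acts measurable with respect to $\{E,E^C\}$ are lotteries, and restricted sovereignty can be applied to them. I will then build two lotteries that swap outcomes between $E$ and $E^C$, so that sovereignty forces social indifference while society's expected utility difference is proportional to $2z_0-1$.

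\textbf{Step 1.} By minimal self-interest there is an individual $i$, outcomes $x_i,y_i\in X_i$ and $x_{-i}\in X_{-i}$ such that the constant act $(x_i,x_{-i})$ is strictly preferred by $i$ to the constant act $(y_i,x_{-i})$. Since constant acts are lotteries and their marginals on $X_{-i}$ coincide, restricted sovereignty gives $u_0(x_i,x_{-i})>u_0(y_i,x_{-i})$.

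\textbf{Step 2.} Define the act $f$ to equal $(x_i,x_{-i})$ on $E$ and $(y_i,x_{-i})$ on $E^C$, and the act $g$ to be the reverse assignment. Both acts are measurable with respect to $\{\emptyset,E,E^C,S\}\subseteq\Lambda$, so they are lotteries with common induced distribution $P^f=P^g=\tfrac12(x_i,x_{-i})+\tfrac12(y_i,x_{-i})$. In particular $P^f_{-i}=P^g_{-i}$ (both are the point mass on $x_{-i}$), and individual $i$ is indifferent between $f$ and $g$ because their induced distributions agree. Restricted sovereignty then yields $f\sim_0 g$.

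\textbf{Step 3.} Under $\pi_0$, society's expected utilities are
\[
U_0(f)=z_0u_0(x_i,x_{-i})+(1-z_0)u_0(y_i,x_{-i}),\qquad
U_0(g)=z_0u_0(y_i,x_{-i})+(1-z_0)u_0(x_i,x_{-i}).
\]
Their difference is
\[
U_0(f)-U_0(g)=(2z_0-1)\bigl(u_0(x_i,x_{-i})-u_0(y_i,x_{-i})\bigr).
\]
Indifference makes this zero, and the bracket is strictly positive by Step 1, so $z_0=\tfrac12$.

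The only delicate point is confirming that $f$ and $g$ really are lotteries in the paper's sense, namely that $f^{-1}(Y)\in\Lambda$ for every measurable $Y\subseteq X$. This holds because each preimage is one of $\emptyset$, $E$, $E^C$ or $S$, and each of these has the same probability under every $\pi_i$ (in particular $\pi_i(E)=\pi_i(E^C)=\tfrac12$ for all $i$). The claim does not need convexity of $\hbbz$; that is used later.
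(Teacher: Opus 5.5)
Your proof is correct and follows essentially the same argument as the paper: use minimal self-interest and restricted sovereignty on constant acts to get $u_0(x_i,x_{-i})>u_0(y_i,x_{-i})$, then swap these two outcomes across an event to which every individual assigns probability $\tfrac12$ and apply restricted sovereignty to the resulting lotteries. The only difference is presentational: your factor $(2z_0-1)$ handles $z_0>\tfrac12$ and $z_0<\tfrac12$ at once, whereas the paper argues by contradiction only for $\pi_0(E)>\tfrac12$, leaving the other case to follow by passing to $E^C$.
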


\begin{proof}
    Suppose otherwise, that there exists an event $E \in \Sigma$ such that $\pi_i(E) = \frac{1}{2}$ for all $i \in I$, and $\pi_0(E) > \frac{1}{2}$.

    By minimal self-interest, there exists an $i$ for whom there exist $x_i, y_i \in X_i$ and $x_{-i} \in X_{-i}$ such that $f \succ_i g$, where $f(s) = (x_i, x_{-i})$ and $g(s) = (y_i, x_{-i})$ for all $s \in S$. Note that constant acts are lotteries, and so by restricted sovereignty, $f \succ_0 g$. And so $u_0(x_i, x_{-i}) > u_0(y_i, x_{-i})$.

    Consider acts $h, l \in F$, where $h^{-1}(x_i, x_{-i}) = E$, $h^{-1}(y_i, x_{-i}) = E^C$, $l^{-1}(x_i, x_{-i}) = E^C$, and $l^{-1}(y_i, x_{-i}) = E$. Since $u_0(x_i, x_{-i}) > u_0(y_i, x_{-i})$, $h \succ_0 l$. On the other hand, since $\pi_i(E) = \frac{1}{2}$, $h \sim_i l$. 
    
    Since $h$ and $l$ are lotteries and $P^h_{-i} = P^l_{-i}$, this contradicts restricted sovereignty.

\end{proof}

The conclusion that social beliefs are an affine combination of individual beliefs follows from Proposition 3 in \cite{mongin1995consistent}. Reproduced below is a shorter proof by \cite{gilboa2004utilitarian}.

\begin{claim} \label{claim.sub2}
    For every $\mathbf{z} \in \mathbf{Z}$, there exists a unique $z_0 = z_0(\mathbf{z})$ such that $(z_0, \mathbf{z}) \in \hat{\mathbf{Z}}$.
\end{claim}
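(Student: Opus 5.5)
Existence is immediate: $\mathbf{Z}$ is the projection of $\hat{\mathbf{Z}}$ onto the last $n$ coordinates, so for any $\mathbf{z} = \bm{\pi}(E)$ the point $(\pi_0(E), \mathbf{z})$ lies in $\hat{\mathbf{Z}}$. The content of the claim is uniqueness. I would argue by contradiction, using the convexity of $\hat{\mathbf{Z}}$ (Lyapunov) together with its symmetry under $\hat{\mathbf{z}} \mapsto \hat{\bm\pi}(S) - \hat{\mathbf{z}}$, in order to produce a point of the form $(z_0, \frac{1}{2}\bm\pi(S))$ with $z_0 \neq \frac{1}{2}$, which Claim \ref{claim.sub1} rules out.

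Concretely, suppose $(z_0, \mathbf{z})$ and $(z_0', \mathbf{z})$ both lie in $\hat{\mathbf{Z}}$ with $z_0 \neq z_0'$. By the complement observation, $(1 - z_0', \bm\pi(S) - \mathbf{z}) \in \hat{\mathbf{Z}}$, since $\pi_0(S) = 1$. Convexity then gives
\begin{align*}
\tfrac{1}{2}(z_0, \mathbf{z}) + \tfrac{1}{2}(1 - z_0', \bm\pi(S) - \mathbf{z}) = \left(\tfrac{1}{2} + \tfrac{z_0 - z_0'}{2}, \tfrac{1}{2}\bm\pi(S)\right) \in \hat{\mathbf{Z}}.
\end{align*}
Since $z_0 \neq z_0'$, the first coordinate differs from $\frac{1}{2}$. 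This contradicts Claim \ref{claim.sub1}, so $z_0$ is unique.

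There is one point to check carefully: the symmetry and convexity must apply to the full vector measure $\hat{\bm\pi}$, including $\pi_0$. Convexity needs $\pi_0$ to be nonatomic, and SEU-rationality of $\succsim_0$ supplies that. Once this is verified, the argument is just the two lines above.

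Looking beyond this claim, I would then show that $z_0(\cdot)$ is affine on the convex set $\mathbf{Z}$: the graph $\{(z_0(\mathbf{z}), \mathbf{z})\}$ equals $\hat{\mathbf{Z}}$ and is convex, so $z_0$ is both convex and concave. Extending it to an affine function on $\mathbb{R}^n$ then yields $\pi_0 = \sum_i \lambda_i \pi_i + c$. Evaluating at $E = \emptyset$ and $E = S$ forces $c = 0$ and $\sum_i \lambda_i = 1$, so $\pi_0$ is an affine combination of $\{\pi_i\}_{i \in I}$.
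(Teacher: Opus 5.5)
Your proof is correct and follows the paper's argument: you reflect one of the two points through $\hat{\bm\pi}(S)$, average it with the other using Lyapunov convexity, and obtain a point $\bigl(\tfrac{1}{2} + \tfrac{z_0 - z_0'}{2}, \tfrac{1}{2}\bm\pi(S)\bigr) \in \hat{\mathbf{Z}}$, which contradicts Claim~\ref{claim.sub1}. You also spell out two details the paper leaves implicit: existence via projection, and the fact that $\pi_0$ is nonatomic because $\succsim_0$ is SEU-rational.
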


\begin{proof}
    
Suppose that $\hbz = (z_0, \mathbf{z})$ and $\hbz' = (z_0', \mathbf{z})$ are in $\hbbz$, where $z_0 < z_0'$. Note that $\hbmu(S) - \hat{\mathbf{z}}' \in \hat{\mathbf{Z}}$. By convexity of $\hbbz$, $\frac{1}{2}\hbz + \frac{1}{2}(\hbmu(S)) - \hbz') = (\frac{1}{2}z_0 + \frac{1}{2}(1-z_0'), \frac{1}{2}\bm{\pi}(S)) \in \hbbz$. This contradicts claim \ref{claim.sub1}.
\end{proof}

\begin{claim} \label{claim.sub3}
    For every $\mathbf{z}, \mathbf{z}' \in \mathbf{Z}$ and every $\beta \in [0,1]$, $z_0(\beta \mathbf{z} + (1-\beta)\mathbf{z}') = \beta z_0(\mathbf{z}) + (1-\beta)z_0(\mathbf{z}')$.
\end{claim}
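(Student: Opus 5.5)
The plan is to follow the Gilboa--Samet--Schmeidler argument: reduce midpoint-affinity to the symmetry established in Claim \ref{claim.sub1}, using convexity of $\hbbz$ (Lyapunov) and the complement trick $\hbz \mapsto \hbmu(S)-\hbz$. I will first prove the statement for $\beta = \frac{1}{2}$ and then extend it to all $\beta \in [0,1]$.

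\textbf{Step 1 (the map is well-defined on mixtures).} Fix $\mathbf{z}, \mathbf{z}' \in \mathbf{Z}$. By Claim \ref{claim.sub2}, the points $(z_0(\mathbf{z}), \mathbf{z})$ and $(z_0(\mathbf{z}'), \mathbf{z}')$ lie in $\hbbz$. Since $\hbbz$ is convex, their $\beta$-mixture
\[
\bigl(\beta z_0(\mathbf{z}) + (1-\beta) z_0(\mathbf{z}'),\ \beta\mathbf{z} + (1-\beta)\mathbf{z}'\bigr)
\]
also lies in $\hbbz$. Its second coordinate $\beta\mathbf{z} + (1-\beta)\mathbf{z}'$ belongs to $\mathbf{Z}$, because $\mathbf{Z}$ is the projection of $\hbbz$, or directly by convexity of $\mathbf{Z}$.

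\textbf{Step 2 (uniqueness gives the identity).} Claim \ref{claim.sub2} says that for each point of $\mathbf{Z}$ there is exactly one first coordinate $z_0$ with $(z_0, \cdot) \in \hbbz$. The point from Step 1 has first coordinate $\beta z_0(\mathbf{z}) + (1-\beta) z_0(\mathbf{z}')$. Uniqueness therefore forces
\[
z_0\bigl(\beta\mathbf{z} + (1-\beta)\mathbf{z}'\bigr) = \beta z_0(\mathbf{z}) + (1-\beta) z_0(\mathbf{z}').
\]
This argument works for every $\beta \in [0,1]$ directly, so no separate dyadic-plus-continuity extension is needed.

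\textbf{Main obstacle.} The only delicate point is that Claim \ref{claim.sub2} must give genuine uniqueness over all of $\mathbf{Z}$, not just existence. Existence holds because $\mathbf{Z}$ is the projection of $\hbbz$. Uniqueness was derived from Claim \ref{claim.sub1}, which in turn used minimal self-interest and restricted sovereignty.

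I would also double-check that convexity of $\hbbz$ (Lyapunov, using nonatomicity of $\pi_0$ and of each $\pi_i$) is exactly what licenses taking the mixture in Step 1. Once that is confirmed, the claim is immediate.
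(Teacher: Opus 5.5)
Your argument is correct and matches the paper's proof: by Lyapunov convexity, the $\beta$-mixture of $(z_0(\mathbf{z}),\mathbf{z})$ and $(z_0(\mathbf{z}'),\mathbf{z}')$ lies in $\hat{\mathbf{Z}}$, and the uniqueness in Claim \ref{claim.sub2} then pins down $z_0$ at the mixed point. Your opening plan to prove the case $\beta=\frac{1}{2}$ first and then extend is unnecessary, as your own Step 2 already notes.
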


\begin{proof}
    Let $\hbz = (z_0(\mathbf{z}), \mathbf{z})$ and $\hbz' = (z_0(\mathbf{z}'), \mathbf{z}')$. By the convexity of $\hbbz$, $\beta \hbz + (1-\beta) \hbz' \in \hbbz$. The first element is $\beta z_0(\mathbf{z}) + (1-\beta)z_0(\mathbf{z}')$. The rest of the elements are $\beta \mathbf{z} + (1-\beta) \mathbf{z}'$. The result follows from claim \ref{claim.sub2}.
\end{proof}

    By claim \ref{claim.sub3}, $z_0(\mathbf{z})$ is a linear function on $\mathbf{Z}$. Note that $z_0(\mathbf{0}) = 0$. Linearity and $z_0(0) = 0$ imply the existence of $(\theta_i)_{i \in I} \in \mathbb{R}^n$ such that $z_0(\mathbf{z}) = \sum_{i\in I} \theta_i z_i$. Then, for any event $E$, $\pi_0(E) = \sum_{i \in I} \theta_i \pi_i(E)$. Substituting in $S$, note that $\sum_{i \in I} \theta_i = 1$. 
\end{proof}

\emph{Aggregation of utility}.

\begin{proof}

Similarly to the proof of theorem \ref{theorem1}, pick any $(x_1^*, x_{-1}^*) \in X$ and define $u_{0,1}$ and $u_{0,-1}$ as before. Then, for any $x_1 \in X_1$ and $x_{-1} \in X_{-1}$, 
    \begin{align*}
        u_{0,1}(x_1) + u_{0,-1}(x_{-1}) = u_0(x_1, x_{-1}^*) + u_0(x_1^*, x_{-1}) - u_0(x_1^*, x_{-1}^*).
    \end{align*}

\cite{dubins1961cut} provides the following result for countably additive and nonatomic probability measures $(\pi_i)_{i\in I}$:
\begin{lemma} \label{dubspan}
    Let $p_1, \dots, p_m$ be non-negative numbers that sum to one. Then there exists a partition $E_1, \dots, E_m$ of $S$ such that, for all $1 \leq j \leq m$, and for all $i \in I$, $\pi_i(E_j) = p_j$.
\end{lemma}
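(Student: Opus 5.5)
The plan is to derive the lemma from the Lyapunov convexity theorem already invoked in the proof of belief aggregation, applied repeatedly to restrictions of the vector measure $\bm{\pi} = (\pi_i)_{i \in I}$. The key observation is the following. Let $T \in \Sigma$ be any event with $\pi_i(T) = c$ for all $i \in I$. The restricted vector measure $E \mapsto (\pi_i(E \cap T))_{i \in I}$ is again countably additive and nonatomic, because an atom of the restriction would be an atom of some $\pi_i$. By Lyapunov, its range is convex. That range contains $\mathbf{0}$ (take $E = \emptyset$) and $c\mathbf{1}$ (take $E = T$), so it contains every point $t\mathbf{1}$ with $t \in [0,c]$. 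Hence there is an event $E \subseteq T$ with $\pi_i(E) = t$ for all $i$ simultaneously.

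With this in hand, I would build the partition inductively.
\begin{enumerate}
\item Start from $T_0 = S$, so that $\pi_i(T_0) = 1$ for all $i$.
\item For $j = 1, \dots, m-1$, suppose $T_{j-1}$ satisfies $\pi_i(T_{j-1}) = 1 - \sum_{l<j} p_l$ for every $i$. Since $p_j \le 1 - \sum_{l<j} p_l$, the observation gives $E_j \subseteq T_{j-1}$ with $\pi_i(E_j) = p_j$ for all $i$.
\item Set $T_j = T_{j-1} \setminus E_j$. Additivity gives $\pi_i(T_j) = 1 - \sum_{l \le j} p_l$ for every $i$, so the hypothesis carries to the next step.
\item Finally set $E_m = T_{m-1}$. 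Then $\pi_i(E_m) = 1 - \sum_{l<m} p_l = p_m$ for all $i$.
\end{enumerate}
The sets $E_1, \dots, E_m$ are pairwise disjoint by construction and their union is $S$, so they form the required partition. Zero probabilities cause no trouble: if $p_j = 0$ the step simply takes $E_j = \emptyset$.

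I expect no serious obstacle, since the substance is entirely in Lyapunov's theorem. The only points to verify are that restricting to $T_{j-1}$ preserves nonatomicity and countable additivity, and that every coordinate of $\bm{\pi}(T_{j-1})$ is equal. That equality is exactly what makes the diagonal segment from $\mathbf{0}$ to $\bm{\pi}(T_{j-1})$ lie in the range at each step. If the paper prefers to cite \cite{dubins1961cut} directly, this argument can stand as the self-contained justification.
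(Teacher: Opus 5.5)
Your argument is correct. The paper gives no proof of this lemma; it simply imports it from \cite{dubins1961cut}, so what you supply is a genuinely different, self-contained route. It uses only the Lyapunov convexity theorem that the paper already invokes when aggregating beliefs, plus the correct observation that restricting to an event $T$ preserves countable additivity and nonatomicity (if $A$ were an atom of $E \mapsto \pi_i(E \cap T)$, then $A \cap T$ would be an atom of $\pi_i$).

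The route usually associated with the cited result is organized differently. One first shows that the partition range $\{(\pi_i(E_j))_{i \in I,\,1 \le j \le m} : (E_1,\dots,E_m) \text{ a measurable partition of } S\}$ is convex. The lemma then follows in one step: the desired matrix, with every entry in column $j$ equal to $p_j$, is the $p$-weighted convex combination of the $m$ trivial partitions that put all of $S$ into a single cell. Proving that convexity uses the same device you use, namely applying Lyapunov to a suitably refined vector measure to find a set carrying a fraction $\alpha$ of every relevant mass. This is followed by a gluing step across two partitions.

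Your $m-1$ successive cuts avoid the gluing and are fully adequate for what the paper needs. The partition-range formulation buys a more general structural statement.

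One minor wording point: the segment from $\mathbf{0}$ to $\bm{\pi}(T_{j-1})$ lies in the range whether or not the coordinates of $\bm{\pi}(T_{j-1})$ coincide. What their equality buys is that this segment is the diagonal, so that the target point $p_j \mathbf{1}$ lies on it.
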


Then there exists lotteries $f,g \in \hat{F}$ such that $P^{f} = \left(\frac{1}{2}(x_1, x_{-1}), \frac{1}{2}(x_1^*, x_{-1}^*)\right)$ and $P^{g} = \left(\frac{1}{2}(x_1^*, x_{-1}), \frac{1}{2}(x_1, x_{-1}^*)\right)$. By mutual independence, $f \sim_1 g$. By restricted sovereignty, $f \sim_0 g$. Since $\succsim_0$ is SEU-rational,
    \begin{align*}
        & \int_S u_0(f(s)) d \pi_0(s) = \frac{1}{2}u_0(x_1, x_{-1}) + \frac{1}{2} u_0(x_1^*, x_{-1}^*) \\
        & = \frac{1}{2}u_0(x_1^*, x_{-1}) + \frac{1}{2}u_0(x_1, x_{-1}^*) = \int_S u_0(g(s)) d \pi_0(s)
    \end{align*} 

And so $u_0(x_1, x_{-1}^*) + u_0(x_1^*, x_{-1}) - u_0(x_1^*, x_{-1}^*) = u_0(x_1, x_{-1})$.
    
Iterate for all $i \in I$ in the same way as in theorem \ref{theorem1}, and so there exists functions $\{u_{0,i}\}_{i \in I}$ such that $u_0(x) = \sum_{i \in I} u_{0,i}(x_i)$.
    
For each $i$, let $u_{i,i}$ be a sub-utility index in the utility index $u_i = u_{i,i} + u_{i,-i}$. The fact that $u_{0,i}$ is a positive affine transformation of $u_{i,i}$ is proven similarly to theorem \ref{theorem1}, by considering lotteries $f,g \in \hat{F}$ for which $P^f_{-i} = P^g_{-i}$.
\end{proof}

We omit the proof of the necessity of restricted sovereignty.

\end{document}